\theoremstyle{plain}
\newtheorem{thm}{Theorem}
\newtheorem{prop}[thm]{Proposition}
\newtheorem{defn}[thm]{Definition}
\newcommand{\norm}[1]{\left\lVert#1\right\rVert}
\begin{document}

\title{LDA-MIG Detectors for Maritime Targets in Nonhomogeneous Sea Clutter}

\author{Xiaoqiang~Hua,~
        Linyu~Peng,~\IEEEmembership{Member,~IEEE,}
        Weijian~Liu,~\IEEEmembership{Senior Member,~IEEE,}
        Yongqiang~Cheng,~
        Hongqiang~Wang,~
        Huafei~Sun,~
        and Zhenghua~Wang
\thanks{This work was partially supported by grants from NSFC  (Nos. 61901479, 62071482, 62035014 and 61921001), JSPS KAKENHI (No. JP20K14365), JST CREST (No. JPMJCR1914), and Keio Gijuku Fukuzawa Memorial Fund.  {\it (Corresponding author: Linyu Peng.)}

X. Hua is with the College of Electronic Science, and also with the College of Computer, National University of Defense Technology, Changsha 410073, China (e-mail: hxq712@yeah.net).

L. Peng is with the Department of Mechanical Engineering, Keio University, Yokohama 223-8522, Japan (e-mail: l.peng@mech.keio.ac.jp).

W. Liu is with the Wuhan Electronic Information Institute, Wuhan 430019, China (e-mail: liuvjian@163.com).

Y. Cheng and H. Wang are with the College of Electronic Science, National University of Defense Technology, Changsha 410073, China (e-mail: cyq101600@126.com; oliverwhq@163.net).

H. Sun is with the School of Mathematics and Statistics, Beijing Institute of Technology, Beijing 100081, China (email: huafeisun@bit.edu.cn).

Z. Wang is with the College of Computer, National University of Defense Technology, Changsha 410073, China (email: wzh123202209@163.com).
}}


\maketitle

\begin{abstract}
This paper deals with the problem of detecting maritime targets embedded in nonhomogeneous sea clutter, where limited number of secondary data is available due to the heterogeneity of sea clutter.
A class of linear discriminant analysis (LDA)-based matrix information geometry (MIG) detectors is proposed in the supervised scenario. As customary, Hermitian positive-definite (HPD) matrices are used to model the observational sample data, and  the clutter covariance matrix of received dataset is estimated as geometric mean of the secondary HPD matrices. Given a set of training HPD matrices with  class labels, that are elements of a higher-dimensional HPD matrix manifold, the LDA manifold projection learns a mapping from the higher-dimensional HPD matrix manifold to a lower-dimensional one subject to maximum discrimination. In the current study, the LDA manifold projection, with the cost function maximizing between-class distance while minimizing within-class distance, is formulated as an optimization problem in the Stiefel manifold. Four robust LDA-MIG detectors corresponding to different geometric measures are proposed. Numerical results based on both simulated radar clutter with interferences and real IPIX radar data show the advantage of the proposed LDA-MIG detectors against their counterparts without using LDA as well as the state-of-art maritime target detection methods in nonhomogeneous sea clutter.
\end{abstract}

\begin{IEEEkeywords}
Target detection, matrix information geometry (MIG) detectors, linear discriminant analysis (LDA), HPD manifold, nonhomogeneous sea clutter.
\end{IEEEkeywords}

\section{INTRODUCTION}
D{\scshape etecting} maritime targets within nonhomogeneous sea clutter is an imperative problem in radar remote sensing community \cite{7987714,9016366,6479294,8617694,6517933,6915876,9336010,9638633,Chen2000611}. Due to the nonhomogeneity of sea clutter arising in clutter discretes, clutter edge, etc., estimation of the clutter covariance matrix is often based on limited number of homogeneous secondary data. Insufficiency of secondary data often leads to a degradation in the detection performance. To ameliorate this problem, many efforts have been concentrated on designing knowledge-based detectors that involve {\it a priori} information on the nonhomogeneous sea clutter \cite{9460779,8721549,4014433,7426844,8995791}. For instance, {\it a priori} information about the topography of the observed scene was exploited to remove dynamic outliers from the training data  in \cite{4014433}. In \cite{4359541}, a special covariance structure that was modeled as the sum of a positive semi-definite matrix plus a term proportional to the identity was incorporated into the estimation of the disturbance covariance matrix to achieve detection performance improvement. Another example was given in \cite{7426844}, in which the Wald test, the Rao test, and the generalized likelihood ratio test were designed based upon the assumption of a symmetrically structured power spectral density to estimate the clutter covariance matrix; the experimental results, performed on simulation clutter and real data, confirmed the advantage of the proposed detectors over their conventional counterparts (see also \cite{4267625,5210021,5484507,9145700,9490335,9635684,9856933,9749061}). In \cite{9336010,9638633}, the authors proposed several excellent detectors for maritime targets embedded in nonhomogeneous sea clutter by resorting to the deep CNN, which significantly improved the detection performance if sufficient sample data is available.
However, the aforementioned methods heavily depend on the clutter characteristics that is not easy to be captured sufficiently. Lack of environment information often results in poor detection performance.

Recently, a new signal detection technique, referred to as the affine invariance Riemannian metric (AIRM)-based matrix information geometry (MIG) detector, was proposed in \cite{4721049}. This detector does not require any {\it a priori} information about clutter characteristics but simply takes the intrinsic Riemannian geometry of the resulting manifold into account. In the AIRM-MIG detector, Hermitian positive-definite (HPD) matrices are used to capture the correlation or power of sample data, and the clutter covariance matrix is estimated as the Riemannian mean of secondary HPD matrices. A decision about the absence or presence of a target signal can be made by comparing the AIRM distance between the HPD matrix in the cell under test (CUT) and the clutter covariance matrix with a given threshold. Results on the analysis of clutter characteristics \cite{9078971}, the wake monitoring \cite{Liu2013,CO201054}, the scatter matrix estimation \cite{7842633}, and the target detection in X-band radars \cite{7226283,6514112,4720937} have shown that the AIRM mean is robust and the AIRM-MIG detector outperforms the classical detector. Beside its applications to target detection, MIG has also been applied in the training sample selection, the clutter covariance matrix estimation \cite{7060458} and the space-time adaptive processing \cite{6450651}. As a matter of fact, outstanding performance of the AIRM-MIG detector is a consequence of the excellent discriminative power of the geodesic distance on the HPD matrix manifold.

In addition to the AIRM, many geometric measures can be defined on HPD matrix manifolds, that reflect different local geometric structures and hence yield different discriminative power. A straightforward idea to improve the detection performance is to exploit discriminative geometric measures to design the MIG detectors. This guideline has been implemented in \cite{8000811,HUA2017106,HUA2018232,HUA2076,9764734}, where the authors designed MIG detectors by resorting to different geometric measures and analyzed their differences in the detection performance. As expected, MIG detectors induced by different geometric measures have exhibited different detection performances. Natural questions then arise: {\it {Which measure can lead to better performance and how to choose a suitable measure?}} One answer is by studying the corresponding anisotropy factors, as, evidentially, target detection via MIG detectors in the K-distributed clutter shows that a geometric measure with larger anisotropy value can lead to better detection performance \cite{8000811}. However, the discriminative power of an HPD matrix manifold equipped with a geometric measure depends not only  on the matrix structure but also on the clutter characteristics. It is practically difficult to choose a geometric measure that can always possess good discrimination ability, since the clutter characteristics usually vary in the domain of space and time.

This paper addresses these drawbacks by proposing a class of linear discriminant analysis (LDA)-MIG detectors in nonhomogeneous sea clutter. The LDA is an effective method for reducing redundant information in the sample data as well as for increasing the class separability. It has been widely applied in computer vision \cite{9409780}, image classification \cite{8720003}, signal processing \cite{CHEBBI20121872}, etc.
Different from the previous MIG detectors, the proposed LDA-MIG detectors can achieve enhanced discriminative power of geometric measures on HPD manifolds in different scenarios by learning a manifold projection from a training dataset, which results in significant performance improvements. Main contributions of this paper are briefly summarized below.

\begin{enumerate}
  \item Inspired by the LDA, for an HPD matrix manifold endowed with a given geometric measure, we propose a LDA manifold projection that aims at improving the discriminative power by learning a mapping from the HPD matrix manifold  to a more discriminative lower-dimensional manifold which maximizes the between-class distance while minimizes the within-class distance. Specifically, HPD matrices are employed to model the sample data and the geometric mean of secondary HPD matrices is used as the estimate of the clutter covariance matrix. Given a set of training HPD matrices with class labels, we model a LDA manifold projection as the cost function that aims to achieve maximum discrimination by resorting to the rule subject to an orthonormal constraint; thus, searching for such a projection mapping HPD matrices from a higher-dimensional manifold to a lower-dimensional one can be formulated as an optimization problem on a Stiefel manifold, which can be solved by the Riemannian gradient descent algorithm. Given a set of training HPD matrices, the projection matrix can be unique obtained.
  \item Four LDA-MIG detectors are proposed on the projected  HPD manifold by employing the AIRM  \cite{pennec2006,Bha2009,sun2016elementary,moakher2011the}, Log-Euclidean metric (LEM) \cite{AFPA2007}, Jensen--Bregman LogDet divergence (JBLD) and symmetrized Kullback--Leibler divergence (SKLD)\cite{Moakher816} as the geometric measures. The manifold projection is incorporated into the detection design. Moreover, influence functions are defined for analyzing the robustness of geometric mean to outliers, and numerical results confirm their robustness.
  \item Simulation and real radar data demonstrate that the proposed detectors have better performance than their counterparts without the LDA and the state-of-art maritime target detectors in nonhomogeneous sea clutter.
\end{enumerate}

The rest of the paper is organized as follows. The methodology of LDA-MIG detectors is formulated in Section \ref{sec:pf}. Section \ref{sec:MIG} introduces briefly the mathematical knowledge of MIG. In Section \ref{sec:SMP}, we define the projection between a higher-dimensional HPD matrix manifold to a more discriminative lower-dimensional one subject to an orthonormal constraint as an optimization problem on a Stiefel manifold, and then solve it by using the Riemannian gradient descent algorithm. Influence functions with respect to various geometric measures are derived in closed-form for analyzing their robustness to outliers. Performance analysis based on the simulation and real radar data is provided in Section \ref{sec:ns}, followed with conclusions in Section \ref{sec:con}.

\textit{Notations:} Vectors and matrices are denoted by boldface lowercase and boldface uppercase letters, respectively. $\operatorname{det}(\cdot)$, $\operatorname{tr}(\cdot)$ and $\norm{ \cdot }$ stand for the determinate, trace and Frobenius norm of a matrix, respectively. Superscripts $(\cdot)^{\operatorname{T}}$ and $(\cdot)^{\operatorname{H}}$ denote the transpose and conjugate transpose of matrices or vectors, respectively. The $N\times N$ identity matrix is denoted by $\bm{I}_N$ or simply $\bm{I}$ if no confusion would be caused. $N$-dimensional complex vectors and $N \times N$ complex matrices are denoted by $\mathbb{C}^N$ and $\mathbb{C}^{N\times N}$. The (compact and complex) Stiefel manifold is denoted by $\operatorname{St}(M,\mathbb{C}^N)$. That a matrix $\bm{A}$ is positive-definite is represented as $\bm{A}\succ \bm{0}$. All $N\times N$ complex HPD matrices form an HPD matrix manifold $\mathscr{P}(N,\mathbb{C})$. The imaginary unit is written as $\operatorname{i}$, and $\operatorname{E}[\cdot]$ denotes the statistical expectation.

\section{PROBLEM FORMULATION}

\label{sec:pf}
As customary, the detection problem of radar targets within sea clutter with sample data $\bm{y}=[y_0,y_1,\ldots,y_{N-1}]^{\operatorname{T}} \in \mathbb{C}^N$ collected from $N$ temporal stationary channels can be formulated as the following binary hypothesis testing problem \cite{huaetal2020}:
\begin{equation}
\left\{
\begin{aligned}
&\mathcal{H}_0: \left\{
\begin{aligned}
&\bm{y} = \bm{c} \\
&\bm{y}_k = \bm{c}_k, \quad k\in [K],
\end{aligned} \right.\\
&\mathcal{H}_1: \left\{
\begin{aligned}
&\bm{y} = \alpha \bm{s} + \bm{c}, \\
&\bm{y}_k = \bm{c}_k, \quad k=[K],
\end{aligned} \right. \\
\end{aligned} \right.
\end{equation}
where $[K]$ denotes the index set $\{1,2,\ldots,K\}$, and $\mathcal{H}_0$ and $\mathcal{H}_1$ denote the hypotheses of absence and presence of a target signal, respectively. The vectors $\bm{y}$ and $\bm{y}_k, k=[K]$ denote the observed data. The vectors $\bm{c}$ and $\bm{c}_k, k=[K]$ stand for the clutter data. The unknown and complex scalar-valued $\alpha$ accounts for the channel propagation effects and target reflectivity. The known signal steering vector $\bm{s}$ is given by
\begin{equation}
\bm{s} = \frac{1}{\sqrt{N}}[1, \exp(-\operatorname{i}2\pi f_d), \ldots, \exp(-\operatorname{i}2\pi f_d(N-1))]^{\operatorname{T}},
\end{equation}
with $f_d$  the normalized Doppler frequency.

In general, for a set of $K$ secondary sample data $\{ \bm{y}_k \}_{k=1}^K$, the clutter covariance matrix estimate is given by the sample covariance matrix (SCM), namely
\begin{equation}
\begin{aligned}
\bm{R}_{SCM} = \frac{1}{K}\sum_{k=1}^K \bm{y}_k\bm{y}_k^{\operatorname{H}}, \quad \bm{y}_k \in \mathbb{C}^N.
\label{eq:SCM}
\end{aligned}
\end{equation}
In other words, \eqref{eq:SCM} gives the arithmetic mean of $K$ Hermitian matrices $\{\bm{y}_k\bm{y}_k^{\operatorname{H}}\}_{k\in[K]}$ with rank one, where $\bm{y}_k\bm{y}_k^{\operatorname{H}}$ provides  the autocorrelation characteristic of the sample data $\bm{y}_k$. It was noticed that the SCM estimator is sensitive to the outlier while the geometric estimators in  HPD matrix manifolds possess good robustness \cite{huaetal2020}. Thus, we employ HPD matrices to capture the correlation or power of the sample data to discriminate the target signal and clutter.
Many matrix structures can be used for modeling the sample data, including the Toeplitz structure \cite{7060458}, the diagonal loading \cite{5417174}, the shrinkage estimators \cite{6894189}, the persymmetric covariance estimators \cite{7593261}, etc. In the current paper, we focus on the diagonal loading structure, namely
\begin{equation}
\begin{aligned}
\bm{R} = \bm{r}\bm{r}^{\operatorname{H}} + \operatorname{tr}(\bm{r}\bm{r}^{\operatorname{H}})\bm{I},
\label{eq:AR1}
\end{aligned}
\end{equation}
where
\begin{equation}
\bm{r} = [r_0, r_1, \ldots, r_{N-1}]^T,
\end{equation}
whose components are
\begin{equation}
r_l = \operatorname{E}[y_i \bar{y}_{i+l}],\quad  0\leq l \leq N-1, i\in [N - l - 1].
\end{equation}
Here, $\bar{y}$ is the conjugate of $y$ and $r_l$ is the $l$-th correlation coefficient of $\bm{y}$. The ergodicity of stationary Gaussian process allows us to approximate the correlation coefficients $r_l$ by  mean of the sample data, i.e.,
\begin{equation}
\widetilde{r}_l = \frac{1}{N} \sum_{i=0}^{N-1- l } {y_i\bar{y}_{i+l}},\quad  0 \leq l \leq N-1.
\end{equation}

The sample data, with or without target signal,  are transformed into new observations of HPD matrices by using \eqref{eq:AR1}. The HPD matrices are computed by the clutter plus target signal or the clutter, and each HPD matrix corresponds to a point in $\mathscr{P}(N,\mathbb{C})$. Given a set of secondary data $\{ \bm{R}_k \}_{k\in[K]}$ that are calculated via \eqref{eq:AR1}, the clutter covariance matrix can be estimated by the geometric mean
\begin{equation}
\bm{R}_\mathcal{G} = \mathcal{G}\left(\{ \bm{R}_k \}_{k\in[K]}\right).
\end{equation}
The essence of signal detection is to discriminate the target signal and clutter, both of which are interpreted  as HPD matrices in $\mathscr{P}(N,\mathbb{C})$.  Presumably, the HPD matrix corresponding to target signal is located relatively far away from those corresponding to clutter only. To determine the absence or presence of a target signal is then equivalent to measure the dissimilarity between the geometric mean of secondary HPD matrices and HPD matrix of the CUT.


Motivated by the LDA, in our detection framework, a LDA manifold projection that maps HPD matrices constructed from the target signal and clutter using \eqref{eq:AR1} in a higher-dimensional HPD matrix manifold to a lower-dimensional one is defined subject to an orthonormal constraint to improve the discriminative power, as shown in Fig. \ref{fig:LDA_Projection}. This mapping is defined by
\begin{equation}
\begin{aligned}
f_{\bm{W}}:\mathscr{P}(N,\mathbb{C})&\rightarrow \mathscr{P}(M,\mathbb{C})\\
\bm{R}&\mapsto \bm{W}^{\operatorname{H}}\bm{R}\bm{W},
\end{aligned}
\end{equation}
where $N>M$ and $\bm{W}$ is in the
Stiefel manifold \cite{Sti1935}
\begin{equation}\label{Stmanifold}
\operatorname{St}(M,\mathbb{C}^N)=\left\{\bm{A}\in \mathbb{C}^{N\times M} \mid \bm{A}^{\operatorname{H}}\bm{A}=\bm{I}_M\right\}.
\end{equation}

\begin{figure}[H]
  \centering
  \includegraphics[width=8.5cm]{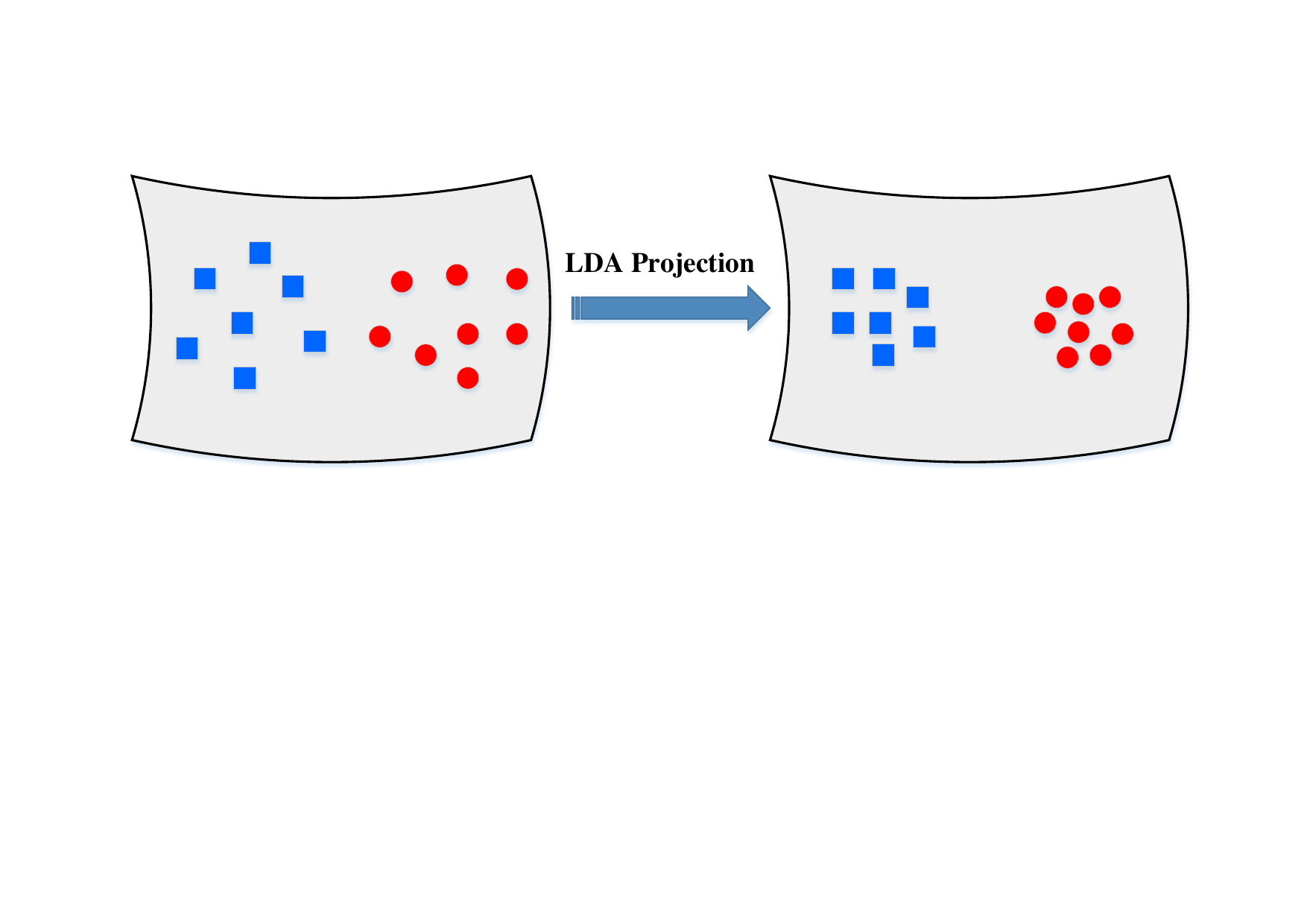}
  \caption{LDA Projection}
  \label{fig:LDA_Projection}
\end{figure}

As shown numerically using the Riemannian gradient descent algorithm in Section \ref{sec:ns} (see also  Section \ref{sec:SMP}), the matrix $\bm{W}$ can be uniquely  learnt from the optimization problem for a given training dataset. For any $\bm{W} \in \operatorname{St}(M,\mathbb{C}^N)$, if $\bm{R}\succ \bm{0}$, then $\bm{W}^{\operatorname{H}}\bm{R}\bm{W} \succ \bm{0}$. Thus, signal detection can be realized in the lower-dimensional HPD matrix manifold $\mathscr{P}(M,\mathbb{C})$ by
\begin{equation}\label{eq:detection_rule}
\begin{aligned}
d(f_{\bm{W}}(\bm{R}_\mathcal{G}),f_{\bm{W}}(\bm{R}_D)) = d(\bm{W}^{\operatorname{H}}\bm{R}_\mathcal{G}\bm{W},\bm{W}^{\operatorname{H}}\bm{R}_D\bm{W}) \mathop{\gtrless}\limits_{\mathcal{H}_0}^{\mathcal{H}_1} \gamma,
\end{aligned}
\end{equation}
where $\bm{R}_\mathcal{G}$ denotes the clutter covariance matrix estimation, $\bm{R}_D$ is the observation in the cell under test, $d(\cdot,\cdot)$ is the geometric distance between two points in the manifold $\mathscr{P}(M,\mathbb{C})$, and $\gamma$ denotes the detection threshold.

Eq. \eqref{eq:detection_rule} indicates that the statistic is actually the geometric distance $d(\cdot,\cdot)$ on $\mathscr{P}(N,\mathbb{C})$. Equipped with the same geometric measure, detection performance is hence closely related to  discriminative power of the HPD matrix manifold. Through the LDA manifold projection, HPD matrices are mapped into a more discriminative lower-dimensional HPD matrix manifold, where the within-class similarity decreases while the between-class similarity increases. Consequently, the detection performance is also improved with such a manifold projection.
The framework of LDA-MIG detectors is illustrated in Fig. \ref{fig:Detection_framework}.
\begin{figure*}[!t]
  \centering
  \includegraphics[width=15cm]{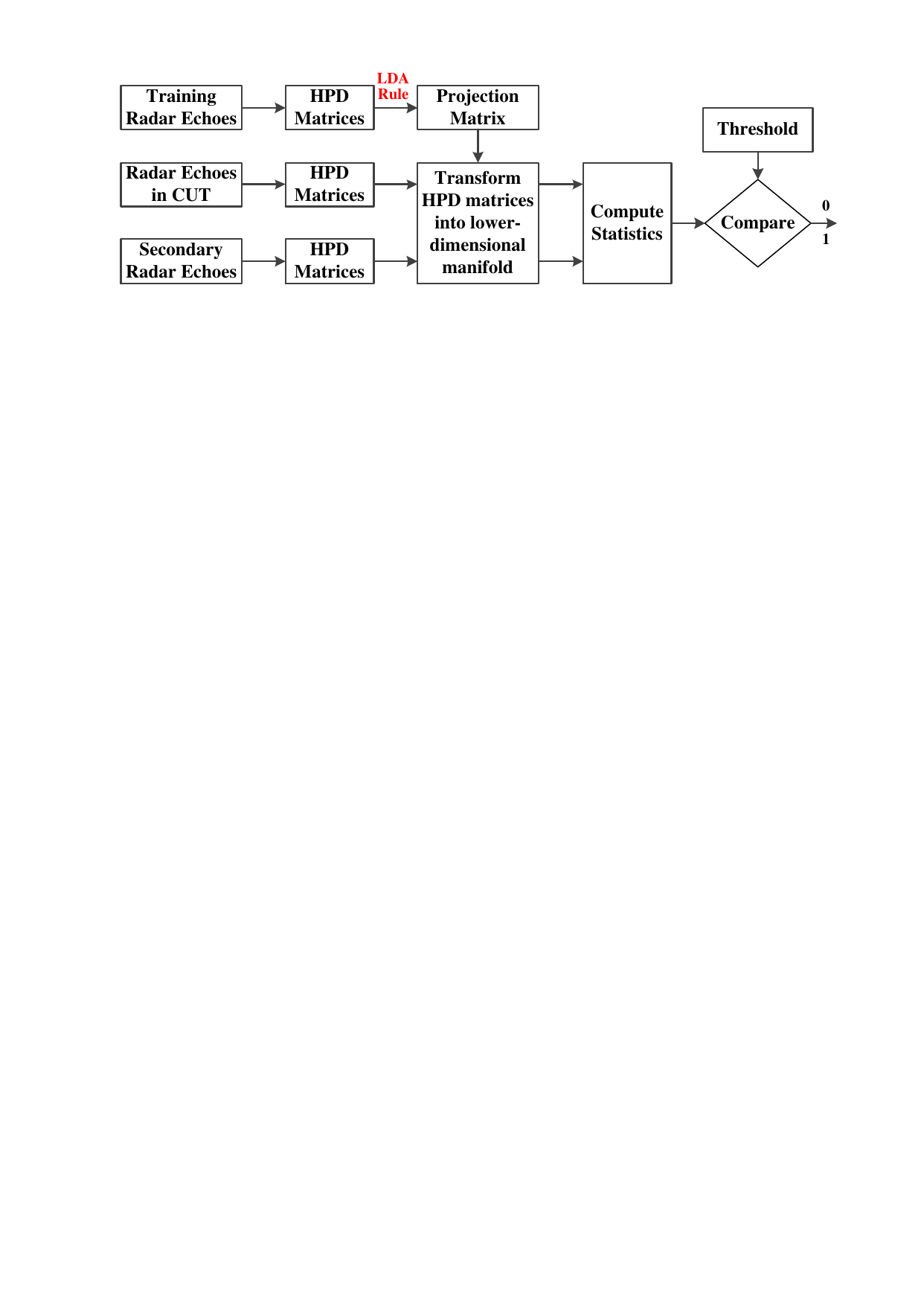}
  \caption{The framework of LDA-MIG detectors}
  \label{fig:Detection_framework}
\end{figure*}

\section{PRELIMINARIES OF MATRIX INFORMATION GEOMETRY}
\label{sec:MIG}
We briefly review the geometric theory of HPD manifolds in this section. Let $GL(N,\mathbb{F})$ be the general linear group with $\mathbb{F}$ either $\mathbb{C}$ or $\mathbb{R}$, where each element of $GL(N,\mathbb{F})$ is an $N\times N$ invertible matrix. The  the Frobenius inner product (or the Hilbert--Schmidt inner product) is defined by
\begin{equation}\label{eq:glmetric}
\langle \bm{X},\bm{Y}\rangle:=\operatorname{tr}(\bm{X}^{\operatorname{H}}\bm{Y}).
\end{equation}
We will mainly be focused on HPD matrices from now on.

For the HPD matrix manifold
\begin{equation*}
\begin{aligned}
\mathscr{P}(N,\mathbb{C}):=\left\{  \bm{Y}\in GL(N,\mathbb{C}) \mid \bm{y}^{\operatorname{H}}\bm{Y}\bm{y}>0, \forall \bm{y}\in\mathbb{C}^N/ \{0\}\right\},
\end{aligned}
\end{equation*}
an AIRM at a point $\bm{P}\in\mathscr{P}(N,\mathbb{C})$ is defined by (e.g., \cite{pennec2006,2006averaging})
\begin{equation}\label{eq:HPDmetric}
g_{\bm{P}}(\bm{A},\bm{B}):=\operatorname{tr}\left(\bm{P}^{-1}\bm{A}\bm{P}^{-1}\bm{B}\right),
\end{equation}
where $\bm{A},\bm{B}\in T_{\bm{P}}\mathscr{P}(N,\mathbb{C})$. 
It gives the geodesic distance of two HPD matrices.

\begin{defn}
The AIRM distance between two HPD matrices $\bm{X}, \bm{Y}\in \mathscr{P}(N,\mathbb{C})$ is defined as
\begin{equation}
\begin{aligned}
d_A^2(\bm{X},\bm{Y}) &= \norm{\operatorname{Log}\left(\bm{X}^{-\frac12}\bm{Y}\bm{X}^{-\frac12}\right)}^2 \\
&= \norm{\operatorname{Log}\left(\bm{X}^{-1}\bm{Y}\right)}^2,
\end{aligned}
\end{equation}
where $\operatorname{Log}$ denotes the unique  principle logarithm of an invertible matrix that does not
have eigenvalues in the closed negative real line.
\end{defn}

Beside the AIRM distance, other distance or distance-like functions can also defined on $\mathscr{P}(N,\mathbb{C})$ to measure the difference between two HPD matrices. Some of them that will be discussed in this paper are listed below.
\begin{defn}
The LEM distance, the JBLD and SKLD of two HPD matrices $\bm{X}, \bm{Y}\in \mathscr{P}(N,\mathbb{C})$ are respectively defined by
\begin{equation}
\begin{aligned}
d_L^2(\bm{X},\bm{Y}) &= \norm{\operatorname{Log}\bm{X}-\operatorname{Log}\bm{Y}}^2,\\
d_J^2(\bm{X},\bm{Y}) &= \ln\det \left(\frac{\bm{X}+\bm{Y}}{2}\right) - \frac{1}{2}\ln\det (\bm{X}\bm{Y}),\\
d_S^2(\bm{X},\bm{Y}) &= \operatorname{tr}\left( \bm{Y}^{-1}\bm{X} + \bm{X}^{-1}\bm{Y} - 2\bm{I}\right).
\end{aligned}
\end{equation}
\end{defn}

\section{GEOMETRIC MEANS AND LDA MANIFOLD PROJECTION}
\label{sec:SMP}
\subsection{Geometric Means}
Given $K$ positive numbers $\{x_k \}_{k\in[K]}$, it is well known that its arithmetic mean is
\begin{equation}
\bar{x} = \frac{1}{K}\sum_{k=1}^K x_k.
\end{equation} In fact, the arithmetic mean is the minimum value of the sum of square distances, namely
\begin{equation}
\bar{x} := \underset{x \in \mathbb{R}^+}{{\arg\min}} \frac{1}{K}\sum_{k=1}^K | x - x_k |^2,
\end{equation}
where $| x - x_k |$ is the distance between $x$ and $x_k$. Similarly, geometric mean for $K$ HPD matrices can be defined as follows.
\begin{defn}
The geometric mean, related to a geometric measure $d: \mathscr{P}(N,C) \times \mathscr{P}(N,C) \rightarrow \mathbb{R}$, for $K$ HPD matrices $\{ \bm{R}_k \}_{k\in[K]}$, is the unique solution of the following function:
\begin{equation}
\bm{\overline{R}} := \underset{\bm{R} \in \mathscr{P}(N,\mathbb{C})}{{\arg\min}} \frac{1}{K}\sum_{k=1}^K d^2(\bm{R},\bm{R}_k).
\label{eq:def_gm}
\end{equation}
\end{defn}

To find minimizer of the function
\begin{equation}
F(\bm{R})= \frac{1}{K}\sum_{k=1}^K d^2(\bm{R},\bm{{R}}_k),\quad \bm{R} \in \mathscr{P}(N,\mathbb{C}),
\end{equation}
it suffices to solve the matrix equation $\nabla F(\bm{R})=\bm{0}$, which is defined using the Riemannian or Frobenius metric of $\mathscr{P}(N,\mathbb{C})$ by the  covariant/directional derivative (see, e.g., \cite{huaetal2020})
\begin{equation}\label{def:gra}
\langle \nabla F(\bm{R}),\bm{X}\rangle:=\frac{\operatorname{d}}{\operatorname{d}\!\varepsilon}\Big|_{\varepsilon=0} F(\bm{R}+\varepsilon\bm{X}),  \forall \bm{X}\in T_{\bm{R}}\mathscr{P}(N,\mathbb{C}).
\end{equation}

The equation $\nabla F(\bm{R})=\bm{0}$ can sometimes be solved analytically to obtain geometric means of $K$ HPD matrices $\{ \bm{R}_k \}_{k\in[K]}$. Otherwise, geometric means can be derived  through the fixed-point algorithm. In the below, we summarize geometric means with respect to the four geometric measures introduced above, i.e., the AIRM distance, the LEM distance, the JBLD, and the SKLD.


\begin{prop}
The AIRM mean for $K$ HPD matrices $\{ \bm{R}_k \}_{k\in[K]}$ can be calculated through the fixed-point algorithm \cite{2006averaging}
\begin{equation}\label{eq:AIRM_Mean}
\bm{\overline{R}}_{l+1} = \bm{\overline{R}}_l^{1/2} \exp\left\{\eta_l \sum_{k=1}^K \operatorname{Log}\left(\bm{\overline{R}}_l^{-1/2}\bm{R}_k\bm{\overline{R}}_l^{-1/2}\right)\right\}\bm{\overline{R}}_l^{1/2},
\end{equation}
where the subscript $l$ denotes the iterative index, $\exp$ denotes the matrix exponential, and $\frac{K-1}{K}<\eta_l<1$ is the step size that can vary with respect to $l$. 
\end{prop}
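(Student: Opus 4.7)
The plan is to derive the iterative scheme from first-order Riemannian optimality for the Karcher variance $F(\bm{R}) = \frac{1}{K}\sum_{k=1}^K d_A^2(\bm{R},\bm{R}_k)$ and then realize it as one step of a Riemannian gradient descent on $\mathscr{P}(N,\mathbb{C})$. First, I would compute the Riemannian gradient of $F$ at $\bm{R}$ with respect to the AIRM. Using the definition \eqref{def:gra} together with the metric \eqref{eq:HPDmetric}, the standard computation of the directional derivative of $d_A^2(\cdot,\bm{R}_k)$ yields $\operatorname{grad} d_A^2(\cdot,\bm{R}_k)\big|_{\bm{R}} = -2\operatorname{Log}_{\bm{R}}(\bm{R}_k)$, where the Riemannian logarithm on $\mathscr{P}(N,\mathbb{C})$ is $\operatorname{Log}_{\bm{R}}(\bm{R}_k)=\bm{R}^{1/2}\operatorname{Log}\bigl(\bm{R}^{-1/2}\bm{R}_k\bm{R}^{-1/2}\bigr)\bm{R}^{1/2}$. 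Averaging over $k$, the Karcher equation $\sum_{k=1}^K \operatorname{Log}\bigl(\bm{R}^{-1/2}\bm{R}_k\bm{R}^{-1/2}\bigr) = \bm{0}$ characterizes critical points of $F$, and by strict geodesic convexity of each $d_A^2(\cdot,\bm{R}_k)$ the unique solution is the AIRM mean.

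Second, I would realize the iteration as a Riemannian gradient descent step. On $\mathscr{P}(N,\mathbb{C})$ equipped with the AIRM, the exponential map at $\bm{R}$ in direction $\bm{V}\in T_{\bm{R}}\mathscr{P}(N,\mathbb{C})$ is $\operatorname{Exp}_{\bm{R}}(\bm{V}) = \bm{R}^{1/2}\exp\bigl(\bm{R}^{-1/2}\bm{V}\bm{R}^{-1/2}\bigr)\bm{R}^{1/2}$. Plugging $\bm{V} = -\eta_l \operatorname{grad}F(\bm{\overline{R}}_l)$ into this map, the sandwiching by $\bm{\overline{R}}_l^{\pm 1/2}$ cancels exactly the outer factors in $\operatorname{Log}_{\bm{\overline{R}}_l}(\bm{R}_k)$, which leaves precisely the sum of matrix logarithms appearing inside the exponential in \eqref{eq:AIRM_Mean} after absorbing the constant $2/K$ into the step-size $\eta_l$.

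Third, I would address the admissible range $\tfrac{K-1}{K}<\eta_l<1$ for the step size. The main obstacle here is not the algebraic derivation (routine once the Riemannian gradient and the exponential map are identified) but showing that this specific interval guarantees monotone decrease of $F$ and hence convergence of the fixed-point iteration. The justification relies on the geodesic convexity of $F$ on $\mathscr{P}(N,\mathbb{C})$ together with a curvature-based bound on its Hessian along geodesics: the upper bound $\eta_l<1$ prevents overshoot, while the lower bound $\eta_l>\tfrac{K-1}{K}$ ensures that the averaging factor $1/K$ is offset by a sufficient contraction rate. Since the quantitative convergence analysis on this manifold is carried out in detail in \cite{2006averaging}, I would invoke that result and restrict my own presentation to the structural identification of \eqref{eq:AIRM_Mean} as a Riemannian gradient descent step, which is the content that is specific to the statement.
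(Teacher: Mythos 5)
Your derivation is correct, and it actually supplies more than the paper does: the paper states this proposition with no proof at all, deferring entirely to \cite{2006averaging}. Your identification of \eqref{eq:AIRM_Mean} as one step of Riemannian gradient descent is sound: with $F(\bm{R})=\frac{1}{K}\sum_k d_A^2(\bm{R},\bm{R}_k)$ one has $\operatorname{grad}F(\bm{R})=-\frac{2}{K}\sum_k\bm{R}^{1/2}\operatorname{Log}\bigl(\bm{R}^{-1/2}\bm{R}_k\bm{R}^{-1/2}\bigr)\bm{R}^{1/2}$, and substituting into $\operatorname{Exp}_{\bm{R}}(\bm{V})=\bm{R}^{1/2}\exp\bigl(\bm{R}^{-1/2}\bm{V}\bm{R}^{-1/2}\bigr)\bm{R}^{1/2}$ reproduces the iteration exactly, with the factor $2t_l/K$ absorbed into $\eta_l$; the Karcher equation $\sum_k\operatorname{Log}\bigl(\bm{R}^{-1/2}\bm{R}_k\bm{R}^{-1/2}\bigr)=\bm{0}$ correctly characterizes the fixed point, and uniqueness follows from geodesic convexity on this Hadamard manifold. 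The one soft spot is your gloss on the interval $\frac{K-1}{K}<\eta_l<1$: the remark that the lower bound ``offsets the averaging factor by a sufficient contraction rate'' is not an argument, and note that this range is not the usual small-step regime (the canonical convergent choice for this iteration written with the \emph{sum} inside the exponential is $\eta_l$ of order $1/K$), so the quantitative claim genuinely rests on the cited reference rather than on anything you or the paper establish. Since you explicitly invoke \cite{2006averaging} for that part, exactly as the paper does, this is a limitation shared with the original rather than a gap specific to your proposal.
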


\begin{prop}\label{prop:LEMm}
The LEM mean for $K$ HPD matrices $\{ \bm{R}_k \}_{k\in[K]}$ is \cite{AFPA2007}
\begin{equation}\label{eq:LEM_Mean}
\bm{\overline{R}} =\exp \left(\frac{1}{K}\sum_{k=1}^K \operatorname{Log} \bm{R}_k \right).
\end{equation}
\end{prop}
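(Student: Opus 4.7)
The plan is to reduce the problem to a Euclidean least-squares problem via the logarithmic chart. The key observation is that the LEM distance $d_L^2(\bm{X},\bm{Y})=\norm{\operatorname{Log}\bm{X}-\operatorname{Log}\bm{Y}}^2$ is, by construction, the pull-back of the Frobenius metric on Hermitian matrices through the matrix logarithm. Since $\operatorname{Log}:\mathscr{P}(N,\mathbb{C})\to\mathscr{H}(N,\mathbb{C})$ (with $\mathscr{H}$ denoting the space of Hermitian matrices) is a diffeomorphism whose inverse is the matrix exponential, I can change variables $\bm{S}=\operatorname{Log}\bm{R}$, $\bm{S}_k=\operatorname{Log}\bm{R}_k$, and rewrite the objective as
\begin{equation*}
F(\bm{R})=\frac{1}{K}\sum_{k=1}^{K}\norm{\bm{S}-\bm{S}_k}^{2},\quad \bm{S}\in\mathscr{H}(N,\mathbb{C}).
\end{equation*}

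Next I would exploit the fact that this new problem is a classical Fréchet-mean problem in the Euclidean Hilbert space $(\mathscr{H}(N,\mathbb{C}),\langle\cdot,\cdot\rangle)$ with the Frobenius inner product \eqref{eq:glmetric}. Expanding the sum of squares and differentiating in $\bm{S}$, or equivalently applying the directional-derivative recipe \eqref{def:gra} to the identity $\partial_{\bm{H}}\norm{\bm{S}-\bm{S}_k}^2=2\operatorname{tr}\bigl((\bm{S}-\bm{S}_k)\bm{H}\bigr)$ for arbitrary Hermitian $\bm{H}$, the stationarity condition becomes $\sum_{k=1}^{K}(\bm{S}-\bm{S}_k)=\bm{0}$, whose unique minimizer is the arithmetic mean $\bm{S}=\frac{1}{K}\sum_{k=1}^{K}\bm{S}_k$. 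Since $F$ is strictly convex in $\bm{S}$ (a positive-definite quadratic), this critical point is the unique global minimum.

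Finally, I would map the answer back through the exponential, giving
\begin{equation*}
\bm{\overline{R}}=\exp(\bm{S})=\exp\!\left(\frac{1}{K}\sum_{k=1}^{K}\operatorname{Log}\bm{R}_k\right),
\end{equation*}
which is the formula in \eqref{eq:LEM_Mean}. Uniqueness on $\mathscr{P}(N,\mathbb{C})$ follows because the change of variables is bijective, so the unique minimizer in $\mathscr{H}(N,\mathbb{C})$ corresponds to a unique minimizer in $\mathscr{P}(N,\mathbb{C})$.

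The only mild obstacle is justifying the change of variables rigorously: one must note that the principal matrix logarithm restricts to a smooth bijection between $\mathscr{P}(N,\mathbb{C})$ and $\mathscr{H}(N,\mathbb{C})$, so minimizing $F$ over $\mathscr{P}(N,\mathbb{C})$ is strictly equivalent to minimizing the transformed functional over $\mathscr{H}(N,\mathbb{C})$; no boundary phenomena appear because $\mathscr{H}(N,\mathbb{C})$ is a full linear space. Beyond that, the argument is the standard Euclidean centroid computation, so there is no serious technical difficulty.
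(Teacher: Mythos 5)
Your proof is correct. Note that the paper itself gives no proof of Proposition \ref{prop:LEMm}; it is quoted from the literature, so there is no in-text argument to compare against. Your route --- pulling the problem back through the principal logarithm, observing that $d_L$ is by construction the Frobenius distance in the chart $\bm{S}=\operatorname{Log}\bm{R}$, and solving the resulting strictly convex Euclidean least-squares problem --- is the standard derivation of the log-Euclidean mean, and it is consistent with how the paper itself handles the LEM elsewhere (in Appendix A the stationarity condition for the perturbed LEM cost is likewise written directly as a linear equation in the logarithms). The alternative, suggested by the paper's general recipe of solving $\nabla F(\bm{R})=\bm{0}$ via \eqref{def:gra}, would differentiate with respect to $\bm{R}$ itself and invoke the integral formula \eqref{eq:logid} for the derivative of the matrix logarithm; that yields the same stationarity condition after using the cyclic invariance of the trace, but it is messier and establishes only criticality, whereas your change of variables gives uniqueness and global optimality for free from the strict convexity of the quadratic on the linear space of Hermitian matrices. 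The one point worth stating explicitly is that $\operatorname{Log}$ maps $\mathscr{P}(N,\mathbb{C})$ bijectively onto the Hermitian matrices (via the eigendecomposition), which you do address, so the argument is complete.
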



\begin{prop}
The JBLD mean for $K$ HPD matrices $\{ \bm{R}_k \}_{k\in[K]}$ can be calculated by
\begin{equation}\label{eq:JBLD_Mean}
\bm{\overline{R}}_{l+1} = \left(\frac{1}{K}\sum_{k=1}^K  \left(\frac{\bm{\overline{R}}_l + \bm{R}_k}{2}\right)^{-1} \right)^{-1},
\end{equation}
where the subscript $l$ denotes the iterative index again.
\end{prop}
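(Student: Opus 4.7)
The plan is to prove the stated fixed-point iteration by characterizing the JBLD mean as a critical point of the cost function $F(\bm{R}) = \frac{1}{K}\sum_{k=1}^K d_J^2(\bm{R},\bm{R}_k)$ and showing that the displayed update rule is exactly the fixed-point form of the optimality condition $\nabla F(\bm{R}) = \bm{0}$, using the gradient definition \eqref{def:gra}.

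First I would expand the JBLD using the definition in the text, writing
\begin{equation*}
d_J^2(\bm{R},\bm{R}_k) = \ln\det\!\left(\frac{\bm{R}+\bm{R}_k}{2}\right) - \tfrac{1}{2}\ln\det\bm{R} - \tfrac{1}{2}\ln\det\bm{R}_k,
\end{equation*}
and note that the last term is constant in $\bm{R}$. Then, using the standard identity $\frac{\operatorname{d}}{\operatorname{d}\!\varepsilon}\big|_{\varepsilon=0}\ln\det(\bm{R}+\varepsilon \bm{X}) = \operatorname{tr}(\bm{R}^{-1}\bm{X})$, the directional derivative in direction $\bm{X}\in T_{\bm{R}}\mathscr{P}(N,\mathbb{C})$ becomes
\begin{equation*}
\tfrac{1}{2K}\sum_{k=1}^K \operatorname{tr}\!\left(\!\left[\left(\tfrac{\bm{R}+\bm{R}_k}{2}\right)^{-1}-\bm{R}^{-1}\right]\bm{X}\!\right).
\end{equation*}
Matching with \eqref{def:gra} via the Frobenius pairing \eqref{eq:glmetric} and invoking the Hermiticity of the bracketed matrix, I read off
\begin{equation*}
\nabla F(\bm{R}) = \tfrac{1}{2}\left(\tfrac{1}{K}\sum_{k=1}^K \left(\tfrac{\bm{R}+\bm{R}_k}{2}\right)^{-1} - \bm{R}^{-1}\right).
\end{equation*}

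Next I would set $\nabla F(\bm{R})=\bm{0}$, which immediately gives $\bm{R}^{-1} = \frac{1}{K}\sum_{k=1}^K \bigl((\bm{R}+\bm{R}_k)/2\bigr)^{-1}$, and taking the inverse of both sides yields precisely the fixed-point equation whose iterative form is the stated update \eqref{eq:JBLD_Mean}. To conclude that this critical point is indeed the JBLD mean, I would invoke the known geodesic convexity/strict convexity of $\bm{R}\mapsto d_J^2(\bm{R},\bm{R}_k)$ on $\mathscr{P}(N,\mathbb{C})$ (established e.g.\ in the literature on LogDet divergences), so the minimizer of $F$ is unique and characterized by the single stationary equation above; this makes $\bm{\overline{R}}$ well defined as in \eqref{eq:def_gm}.

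The main obstacle is not the derivation of the fixed-point equation, which is a clean matrix-calculus exercise, but rather the justification that the Picard iteration \eqref{eq:JBLD_Mean} actually converges to that unique minimizer. I would handle this by citing the contractive behavior of the map $\Phi(\bm{R}):=\bigl(\frac{1}{K}\sum_k((\bm{R}+\bm{R}_k)/2)^{-1}\bigr)^{-1}$ on $\mathscr{P}(N,\mathbb{C})$ under the Thompson part metric, for which each summand $\bm{R}\mapsto ((\bm{R}+\bm{R}_k)/2)^{-1}$ is order-reversing and non-expansive and the outer inversion restores order; averaging preserves the non-expansiveness, and strict contraction on any bounded set follows from the strict convexity of $F$. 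This yields a unique fixed point coinciding with the JBLD mean and completes the proof.
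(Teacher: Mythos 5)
Your derivation is correct and is essentially the paper's own proof: expand the JBLD, compute the Euclidean gradient via \eqref{def:gra} to get $\nabla F(\bm{R}) = \frac{1}{2K}\sum_{k}\bigl\{\bigl(\frac{\bm{R}+\bm{R}_k}{2}\bigr)^{-1}-\bm{R}^{-1}\bigr\}$, set it to zero, and rearrange into the fixed-point form. The additional material on uniqueness and on convergence of the Picard iteration goes beyond what the paper establishes (it only derives the stationarity equation); that part of your argument is the least rigorous (the Thompson-metric contraction claim is asserted rather than proved), but it is not needed to match the paper's proposition.
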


\begin{proof}
Denote $F(\bm{R})$ as the objective function to be minimized, namely
\begin{equation}
F(\bm{R}) = \frac{1}{K}\sum_{k=1}^K \left\{\ln\det\left(\frac{\bm{R}+\bm{R}_k}{2}\right) - \frac{1}{2}\ln\det(\bm{R}\bm{R}_k)\right\}.
\end{equation}
Its gradient  can be immediately calculated via the definition \eqref{def:gra} and reads
\begin{equation}
\nabla F(\bm{R}) = \frac{1}{2K}\sum_{k=1}^K \left\{\left(\frac{\bm{R}+\bm{R}_k}{2}\right)^{-1} - \bm{R}^{-1} \right\}.
\end{equation}

Moving the $\bm{R}^{-1}$ terms in $\nabla F(\bm{R}) = \bm{0}$ to the right side, we  obtain the fixed-point algorithm.
\end{proof}

\begin{prop}\label{prop:SKLDm}
The SKLD mean for $K$ HPD matrices $\{ \bm{R}_k \}_{k\in[K]}$ is given by \cite{9764734}
\begin{equation}\label{eq:SKL_Mean}
\bm{\overline{R}} = \bm{A}^{-1/2}\left( \bm{A}^{1/2}\bm{B}\bm{A}^{1/2} \right)^{1/2}\bm{A}^{-1/2},
\end{equation}
where
\begin{equation}
\bm{A} = \sum_{k=1}^K \bm{R}_k^{-1}, \quad \bm{B} = \sum_{k=1}^K \bm{R}_k.
\end{equation}
\end{prop}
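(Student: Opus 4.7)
The plan is to follow the general recipe used in the preceding propositions: write down the SKLD-based Fréchet cost, compute its gradient from the directional-derivative definition \eqref{def:gra}, and then solve the resulting stationarity equation in closed form. Concretely, I set
\begin{equation*}
F(\bm{R})=\frac{1}{K}\sum_{k=1}^K \operatorname{tr}\!\left(\bm{R}_k^{-1}\bm{R}+\bm{R}^{-1}\bm{R}_k-2\bm{I}\right),\qquad \bm{R}\in\mathscr{P}(N,\mathbb{C}).
\end{equation*}
The first term is linear in $\bm{R}$, so its derivative in direction $\bm{X}$ contributes $\operatorname{tr}(\bm{R}_k^{-1}\bm{X})$, and for the second term I use the standard identity $\tfrac{d}{d\varepsilon}|_{\varepsilon=0}(\bm{R}+\varepsilon\bm{X})^{-1}=-\bm{R}^{-1}\bm{X}\bm{R}^{-1}$ together with cyclicity of the trace to get $-\operatorname{tr}(\bm{R}^{-1}\bm{R}_k\bm{R}^{-1}\bm{X})$. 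Comparing with \eqref{def:gra} then yields
\begin{equation*}
\nabla F(\bm{R})=\frac{1}{K}\sum_{k=1}^K\left(\bm{R}_k^{-1}-\bm{R}^{-1}\bm{R}_k\bm{R}^{-1}\right).
\end{equation*}

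Next I set $\nabla F(\bm{R})=\bm{0}$ and rearrange using the definitions $\bm{A}=\sum_k \bm{R}_k^{-1}$, $\bm{B}=\sum_k \bm{R}_k$. This collapses the $K$ summands into the single matrix Riccati equation
\begin{equation*}
\bm{R}\,\bm{A}\,\bm{R}=\bm{B}.
\end{equation*}
The main trick, and arguably the only non-routine step, is to solve this for $\bm{R}\in\mathscr{P}(N,\mathbb{C})$ in closed form. I plan to conjugate by $\bm{A}^{1/2}$ (which exists and is HPD because $\bm{A}$ is a sum of HPD matrices, hence HPD):
\begin{equation*}
\bm{A}^{1/2}\bm{R}\bm{A}\bm{R}\bm{A}^{1/2}=\bm{A}^{1/2}\bm{B}\bm{A}^{1/2},
\end{equation*}
and to observe that the left-hand side is exactly $\bigl(\bm{A}^{1/2}\bm{R}\bm{A}^{1/2}\bigr)^2$. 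Since $\bm{A}^{1/2}\bm{R}\bm{A}^{1/2}$ is HPD, it is the unique HPD square root of $\bm{A}^{1/2}\bm{B}\bm{A}^{1/2}$, giving
\begin{equation*}
\bm{A}^{1/2}\bm{R}\bm{A}^{1/2}=\bigl(\bm{A}^{1/2}\bm{B}\bm{A}^{1/2}\bigr)^{1/2},
\end{equation*}
and conjugating back by $\bm{A}^{-1/2}$ produces precisely \eqref{eq:SKL_Mean}.

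The hardest aspect is not the algebra but justifying that the critical point obtained is indeed the unique minimizer; I would note (or cite, following \cite{9764734}) that $F$ is strictly geodesically convex on $\mathscr{P}(N,\mathbb{C})$ with respect to the AIRM, so the stationary point is unique and global. A secondary subtlety is that the symmetric Riccati equation $\bm{R}\bm{A}\bm{R}=\bm{B}$ has many solutions in $GL(N,\mathbb{C})$, but only one HPD solution, which is why choosing the positive square root in the step above is legitimate. With these two points handled, the proof is complete.
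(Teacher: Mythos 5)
Your proof is correct and follows exactly the recipe the paper prescribes for all its geometric means (solve $\nabla F(\bm{R})=\bm{0}$ via the directional derivative \eqref{def:gra}); the paper itself omits a proof for this proposition and defers to the cited reference, but your gradient agrees with the one the paper records in Appendix~\ref{app:SKLDH}, and your reduction of the stationarity condition to the Riccati equation $\bm{R}\bm{A}\bm{R}=\bm{B}$ with its unique HPD solution is the standard and correct route. The remarks on strict geodesic convexity and on selecting the HPD branch of the square root appropriately address the only genuine subtleties.
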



\subsection{Influence Functions}

In order to analyze the robustness of a geometric mean, we define an influence function with respect to the geometric mean for the HPD matrices contaminated by outliers, where the influence function reflects the effect of outliers on the estimate accuracy of the geometric mean.
Let $\bm{\overline{R}}$ be the geometric mean of $K$ HPD matrices $\{ \bm{R}_k \}_{k\in[K]}$, and let $\bm{\widehat{R}}$ be the geometric mean of $K$ HPD matrices and $J$ HPD outliers $\{ \bm{P}_j \}_{j\in[J]}$, with a weight $\varepsilon$ $(\varepsilon \ll 1)$ into these $K$ HPD matrices. The mean $\widehat{\bm{R}}$ is assumed to satisfy $\nabla G(\widehat{\bm{R}})=\bm{0}$ where
\begin{equation}
\begin{aligned}
G(\bm{R}) := (1-\varepsilon)\frac{1}{K}\sum_{k=1}^{K}d^2(\bm{R},\bm{R}_k) + \varepsilon\frac{1}{J}\sum_{j=1}^{J}d^2(\bm{R},\bm{P}_j).
\end{aligned}
\end{equation}
Namely, it is a minimizer of the function $G(\bm{R})$.
Furthermore, the mean $\bm{\widehat{R}}$ can be defined as a perturbation
\begin{equation}
\begin{aligned}
\bm{\widehat{R}} = \bm{\overline{R}} + \varepsilon \bm{H}\left(\{ \bm{R}_k \}_{k\in[K]},\{ \bm{P}_j \}_{j\in[J]}\right)+O(\varepsilon^2),
\end{aligned}
\end{equation}
and we define the {\em influence function} as
\begin{equation}
\begin{aligned}
&f\left(\{ \bm{R}_k \}_{k\in[K]},\{ \bm{P}_j \}_{j\in[J]}\right)  \\
&\quad\quad\quad := \norm{ \bm{H}\left(\{ \bm{R}_k \}_{k\in[K]},\{ \bm{P}_j \}_{j\in[J]}\right)},
\end{aligned}
\end{equation}
where $ \bm{H}\left(\{ \bm{R}_k \}_{k\in[K]},\{ \bm{P}_j \}_{j\in[J]}\right)$ is a Hermitian matrix depending on $\{ \bm{R}_k \}_{k\in[K]}$ and $\{ \bm{P}_j \}_{j\in[J]}$.
The arguments will often be omitted for convenience. 
Considering $K$ HPD matrices $\{ \bm{R}_k \}_{k\in[K]}$ and $J$ outliers $\{ \bm{P}_j \}_{j\in[J]}$, the influence functions with respect to the four geometric means are given below.

\begin{prop}\label{prop:AIRMinf_fun}
The influence function with respect to the AIRM mean is $h=\norm{\bm{H}}$ where
\begin{equation}\label{eq:AIRMInfluFun}
\bm{H} = - \frac{1}{J}\sum_{j=1}^{J}\frac{\bm{\overline{R}}  \operatorname{Log}\left(\bm{P}_j^{-1}\bm{\overline{R}}\right)+ \operatorname{Log}\left(\bm{\overline{R}}\bm{P}_j^{-1}\right)\bm{\overline{R}}}{2}.
\end{equation}
\end{prop}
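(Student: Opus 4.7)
My approach is to linearize the critical-point equation $\nabla G(\widehat{\bm{R}})=\bm{0}$ in $\varepsilon$, exploiting the Karcher characterization of $\bm{\overline{R}}$ to kill off most of the Hessian contribution, and then to simplify the remaining linear relation using the AIRM-intrinsic gradient. Starting from $d_A^2(\bm{R},\bm{Q})=\|\operatorname{Log}(\bm{R}^{-1/2}\bm{Q}\bm{R}^{-1/2})\|^2$ and the definition \eqref{def:gra}, a short calculation using the similarity identity $\operatorname{Log}(\bm{R}^{-1/2}\bm{Q}\bm{R}^{-1/2})=\bm{R}^{-1/2}\operatorname{Log}(\bm{R}^{-1}\bm{Q})\bm{R}^{1/2}$ gives the Frobenius gradient
\[
\nabla_{\bm{R}}\,d_A^2(\bm{R},\bm{Q})=2\bm{R}^{-1}\operatorname{Log}(\bm{R}\bm{Q}^{-1}),
\]
so the Karcher condition $\nabla F(\bm{\overline{R}})=\bm{0}$ is equivalent to $\sum_{k=1}^{K}\operatorname{Log}(\bm{\overline{R}}\bm{R}_k^{-1})=\bm{0}$.

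I would then substitute $\widehat{\bm{R}}=\bm{\overline{R}}+\varepsilon\bm{H}+O(\varepsilon^2)$ into $(1-\varepsilon)\nabla F(\widehat{\bm{R}})+\varepsilon\nabla\tilde{F}(\widehat{\bm{R}})=\bm{0}$, where $\tilde{F}(\bm{R}):=\frac{1}{J}\sum_{j}d_A^2(\bm{R},\bm{P}_j)$, and collect $O(\varepsilon)$ terms. The $O(1)$ piece vanishes by the Karcher condition, and the ``outer'' contribution $-\bm{\overline{R}}^{-1}\bm{H}\bm{\overline{R}}^{-1}\sum_k\operatorname{Log}(\bm{\overline{R}}\bm{R}_k^{-1})$ produced by differentiating $\bm{R}^{-1}$ also vanishes for the same reason, leaving the linear equation
\[
D\nabla F(\bm{\overline{R}})[\bm{H}]=-\nabla\tilde{F}(\bm{\overline{R}}).
\]

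The main obstacle is that $D\nabla F(\bm{\overline{R}})$ is not literally a multiple of the identity on the tangent space: the Fr\'echet derivative of the matrix logarithm inside $\nabla F$ contributes a non-trivial Daleckii--Krein integral. To sidestep this, I would pass to the AIRM-intrinsic gradient using the standard relation $\operatorname{grad}^A f(\bm{R})=\bm{R}\,\nabla f(\bm{R})\,\bm{R}$: in Riemannian normal coordinates at $\bm{\overline{R}}$ the AIRM Hessian of $F$ reduces, at leading order, to the identity operator, which is the Euclidean linearization customarily invoked in Riemannian robust statistics for the influence function. Under this reduction the first-order equation becomes simply $\bm{H}=-\tfrac{1}{2}\operatorname{grad}^A\tilde{F}(\bm{\overline{R}})$.

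To close the argument I would invoke the explicit AIRM gradient $\operatorname{grad}^A d_A^2(\bm{R},\bm{P})=-2\bm{R}\operatorname{Log}(\bm{R}^{-1}\bm{P})$, rewrite it in the manifestly Hermitian symmetrized form $-\bigl[\bm{R}\operatorname{Log}(\bm{R}^{-1}\bm{P})+\operatorname{Log}(\bm{P}\bm{R}^{-1})\bm{R}\bigr]$ via $\bm{R}\operatorname{Log}(\bm{R}^{-1}\bm{P})=\operatorname{Log}(\bm{P}\bm{R}^{-1})\bm{R}$, and apply the sign flips $\operatorname{Log}(\bm{R}^{-1}\bm{P})=-\operatorname{Log}(\bm{P}^{-1}\bm{R})$ and $\operatorname{Log}(\bm{P}\bm{R}^{-1})=-\operatorname{Log}(\bm{R}\bm{P}^{-1})$. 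Substituting $\bm{R}=\bm{\overline{R}}$, $\bm{P}=\bm{P}_j$ and averaging over $j$ then yields precisely the formula \eqref{eq:AIRMInfluFun}.
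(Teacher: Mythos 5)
Your proposal lands on exactly the stated formula: the Euclidean gradient $\nabla_{\bm{R}}d_A^2(\bm{R},\bm{Q})=2\bm{R}^{-1}\operatorname{Log}(\bm{R}\bm{Q}^{-1})$ is correct (and already Hermitian), the Karcher condition $\sum_k\operatorname{Log}(\bm{\overline{R}}\bm{R}_k^{-1})=\bm{0}$ does annihilate both the $O(1)$ term and the contribution from differentiating $\bm{R}^{-1}$, and your final sign and symmetrization manipulations reproduce \eqref{eq:AIRMInfluFun} verbatim, since $\bm{\overline{R}}\operatorname{Log}(\bm{P}_j^{-1}\bm{\overline{R}})=\operatorname{Log}(\bm{\overline{R}}\bm{P}_j^{-1})\bm{\overline{R}}$. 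The paper's own proof of this proposition is a one-line citation, but its template for the other three measures (Appendices A--C) is different from yours: it writes the exact linearized equation $D\nabla F(\bm{\overline{R}})[\bm{H}]=-\nabla\tilde F(\bm{\overline{R}})$, takes the trace of both sides, and then \emph{chooses} a Hermitian $\bm{H}$ consistent with that single scalar identity. You instead pass to the intrinsic gradient and assert that the AIRM Hessian of $F$ at $\bm{\overline{R}}$ is the identity. That is the one step I would not let pass as stated: on the nonpositively curved manifold $\mathscr{P}(N,\mathbb{C})$ the Riemannian Hessian of $\tfrac12 d_A^2(\cdot,\bm{R}_k)$ at $\bm{\overline{R}}$ equals the identity only in the radial direction (equivalently, exactly only when $\bm{\overline{R}}$ and $\bm{R}_k$ commute); transversally its eigenvalues exceed $1$ by an amount controlled by the curvature and by $d_A(\bm{\overline{R}},\bm{R}_k)$. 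So the Daleckii--Krein integral you set out to sidestep is genuinely present, and $\bm{H}=\frac{1}{J}\sum_j\operatorname{Log}_{\bm{\overline{R}}}(\bm{P}_j)$ solves the linearized equation only to leading order in the dispersion of $\{\bm{R}_k\}$ about $\bm{\overline{R}}$. This does not put you below the paper's own standard of rigor --- the trace-and-choose argument used there determines $\bm{H}$ no more uniquely --- but you should present the Hessian reduction explicitly as an approximation (or replace it by the paper's trace argument) rather than as an identity.
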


\begin{proof}
The proof is similar to that given in \cite{huaetal2020}, where the Hermitian property of $\bm{H}$ should be taken into consideration.
\end{proof}

\begin{prop}\label{prop:LEMinf_fun}
The influence function of the LEM mean is $h=\norm{\bm{H}}$ where
\begin{equation}\label{eq:LEMInfluFun}
\bm{H} = \bm{\overline{R}}^{1/2}\left( \frac{1}{J}\sum_{j=1}^{J}\operatorname{Log}\bm{P}_j - \operatorname{Log}\bm{\overline{R}}\right)\overline{\bm{R}}^{1/2}.
\end{equation}
\end{prop}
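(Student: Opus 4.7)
The plan is to exploit the closed-form expression for the LEM mean given in Proposition~\ref{prop:LEMm}, which reduces the perturbed minimization to a linear computation on the image of the matrix logarithm, followed by a first-order expansion of the matrix exponential.

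First, I would observe that $d_L^2(\bm{R},\bm{Q})=\|\operatorname{Log}\bm{R}-\operatorname{Log}\bm{Q}\|^2$ is quadratic in $\operatorname{Log}\bm{R}$, so the convex combination $G(\bm{R})$ remains quadratic in $\operatorname{Log}\bm{R}$. The same argument that gives Proposition~\ref{prop:LEMm} therefore applies verbatim with weights $(1-\varepsilon)/K$ and $\varepsilon/J$, producing the unique minimizer
\begin{equation*}
\operatorname{Log}\bm{\widehat{R}}=(1-\varepsilon)\frac{1}{K}\sum_{k=1}^{K}\operatorname{Log}\bm{R}_k+\varepsilon\frac{1}{J}\sum_{j=1}^{J}\operatorname{Log}\bm{P}_j=\operatorname{Log}\bm{\overline{R}}+\varepsilon\bm{Y},
\end{equation*}
where $\bm{Y}:=\frac{1}{J}\sum_{j=1}^{J}\operatorname{Log}\bm{P}_j-\operatorname{Log}\bm{\overline{R}}$ is Hermitian. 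Hence $\bm{\widehat{R}}=\exp\bigl(\operatorname{Log}\bm{\overline{R}}+\varepsilon\bm{Y}\bigr)$ in closed form, with no fixed-point iteration needed.

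Second, I would Taylor-expand this exponential in $\varepsilon$ and match coefficients with $\bm{\widehat{R}}=\bm{\overline{R}}+\varepsilon\bm{H}+O(\varepsilon^2)$. The coefficient of $\varepsilon$ is the Fr\'echet derivative of $\exp$ at $\operatorname{Log}\bm{\overline{R}}$ evaluated at $\bm{Y}$; passing to the Hermitian representative of this tangent vector at $\bm{\overline{R}}$ (exactly the symmetrization step invoked in the proof of Proposition~\ref{prop:AIRMinf_fun}) collapses the expression to the symmetric form $\bm{\overline{R}}^{1/2}\bm{Y}\bm{\overline{R}}^{1/2}$. Taking the Frobenius norm then yields $h=\|\bm{H}\|$, which is the claim.

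The main obstacle is the non-commutativity of $\operatorname{Log}\bm{\overline{R}}$ and $\bm{Y}$. The raw Daleckii--Krein representation
\begin{equation*}
D\!\exp_{\operatorname{Log}\bm{\overline{R}}}[\bm{Y}]=\int_0^1 \bm{\overline{R}}^{\,1-s}\bm{Y}\bm{\overline{R}}^{\,s}\,ds
\end{equation*}
only collapses to $\bm{\overline{R}}^{1/2}\bm{Y}\bm{\overline{R}}^{1/2}$ once the symmetric, Hermitian-preserving representation of the perturbation is used (as emphasised for the AIRM case, where the Hermitian property of $\bm{H}$ was flagged as the key technical point). Justifying this collapse carefully—either via the symmetrization argument inherited from Proposition~\ref{prop:AIRMinf_fun} or by working in a simultaneous eigenbasis at the level of the linearization—is the only non-trivial step; everything else is bookkeeping on Proposition~\ref{prop:LEMm}.
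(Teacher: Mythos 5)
Your first step is exactly the paper's: because $d_L^2$ is quadratic in $\operatorname{Log}\bm{R}$, the perturbed stationarity condition gives $\operatorname{Log}\widehat{\bm{R}}=\operatorname{Log}\overline{\bm{R}}+\varepsilon\bm{Y}$ with $\bm{Y}=\frac{1}{J}\sum_j\operatorname{Log}\bm{P}_j-\operatorname{Log}\overline{\bm{R}}$ (the paper reaches the same identity by writing out $\nabla G(\widehat{\bm{R}})=\bm{0}$ rather than by citing Proposition~\ref{prop:LEMm}, but this is cosmetic). The gap is in your second step. The identity
\begin{equation*}
D\!\exp_{\operatorname{Log}\overline{\bm{R}}}[\bm{Y}]=\int_0^1 \overline{\bm{R}}^{\,1-s}\bm{Y}\,\overline{\bm{R}}^{\,s}\,\operatorname{d}\!s \;=\;\overline{\bm{R}}^{1/2}\bm{Y}\,\overline{\bm{R}}^{1/2}
\end{equation*}
is simply false when $\overline{\bm{R}}$ and $\bm{Y}$ do not commute, and no symmetrization can rescue it: the integral is already Hermitian for Hermitian $\bm{Y}$ (the map $s\mapsto 1-s$ exchanges the two factors), so the ``Hermitian representative'' step changes nothing. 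Concretely, in an eigenbasis of $\overline{\bm{R}}$ with eigenvalues $\lambda_i$, the $(i,j)$ entry of the integral is $Y_{ij}\,(\lambda_i-\lambda_j)/(\ln\lambda_i-\ln\lambda_j)$ (the logarithmic mean), whereas the $(i,j)$ entry of $\overline{\bm{R}}^{1/2}\bm{Y}\overline{\bm{R}}^{1/2}$ is $Y_{ij}\sqrt{\lambda_i\lambda_j}$ (the geometric mean); these agree only on the diagonal. There is also no ``simultaneous eigenbasis'' to work in unless $[\overline{\bm{R}},\bm{Y}]=\bm{0}$. Since the two candidate matrices differ in their off-diagonal entries, their Frobenius norms differ, so the claim $h=\norm{\bm{H}}$ with $\bm{H}=\overline{\bm{R}}^{1/2}\bm{Y}\overline{\bm{R}}^{1/2}$ cannot be obtained as an exact consequence of the Taylor expansion you set up.

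What the paper actually does at this point is weaker than what you are trying to prove: it differentiates $\operatorname{Log}\widehat{\bm{R}}$ to get the linearized equation $\int_0^1[(\overline{\bm{R}}-\bm{I})s+\bm{I}]^{-1}\bm{H}[(\overline{\bm{R}}-\bm{I})s+\bm{I}]^{-1}\operatorname{d}\!s=\bm{Y}$, takes the trace of both sides to obtain the scalar identity $\operatorname{tr}(\overline{\bm{R}}^{-1}\bm{H})=\operatorname{tr}(\bm{Y})$, and then \emph{chooses} $\bm{H}=\overline{\bm{R}}^{1/2}\bm{Y}\overline{\bm{R}}^{1/2}$ as a Hermitian matrix consistent with that single scalar constraint. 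In other words, the stated $\bm{H}$ is a designated representative used to define the influence value, not the exact first-order perturbation of the mean. To repair your argument you should either adopt this trace-matching convention explicitly, or keep the exact perturbation $\bm{H}=\int_0^1\overline{\bm{R}}^{\,1-s}\bm{Y}\overline{\bm{R}}^{\,s}\operatorname{d}\!s$ and note that it reduces to $\overline{\bm{R}}^{1/2}\bm{Y}\overline{\bm{R}}^{1/2}$ only in the commuting case; as written, the ``collapse'' step you flag as the crux would fail.
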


\begin{proof}
 See Appendix \ref{app:LEMH}.
\end{proof}

\begin{prop}\label{prop:JBLDinf_fun}
The influence function of the JBLD mean  is $h=\norm{\bm{H}}$ where
\begin{equation}\label{eq:JBLDInfluFun}
\begin{aligned}
\bm{H}&= \frac{K}{J}\sum_{j=1}^{J}\left(\bm{\overline{R}}^{-1} - \left(\frac{\bm{\overline{R}}+\bm{P}_j}{2}\right)^{-1}\right)\\
&\quad\quad\quad \times \left( \sum_{k=1}^{K}\left(\bm{\overline{R}}^{-2} - \frac{1}{2}\left(\frac{\bm{\overline{R}}+\bm{R}_k}{2}\right)^{-2}\right) \right)^{-1}.
\end{aligned}
\end{equation}
\end{prop}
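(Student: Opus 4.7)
The plan is to mimic the derivation that was already carried out for the AIRM and LEM means in Propositions \ref{prop:AIRMinf_fun} and \ref{prop:LEMinf_fun}. Starting from the optimality condition $\nabla G(\widehat{\bm{R}}) = \bm{0}$, I would split the gradient as
$$\nabla G(\bm{R}) = (1-\varepsilon)\,\nabla F(\bm{R}) + \varepsilon\,\nabla \widetilde{F}(\bm{R}),$$
where $F$ is the JBLD cost function for the $K$ inliers, already shown in the proof of Proposition 6 to have gradient $\nabla F(\bm{R}) = \frac{1}{2K}\sum_k\{((\bm{R}+\bm{R}_k)/2)^{-1}-\bm{R}^{-1}\}$, and $\widetilde{F}$ is the analogous cost with $\bm{R}_k$ replaced by the outliers $\bm{P}_j$. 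Since $\overline{\bm{R}}$ minimizes $F$ we have $\nabla F(\overline{\bm{R}}) = \bm{0}$, so the zeroth-order terms drop out automatically when we substitute $\widehat{\bm{R}} = \overline{\bm{R}} + \varepsilon\bm{H} + O(\varepsilon^2)$.

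Next I would collect the first-order terms in $\varepsilon$. Using the differentiation rule $d(\bm{A}^{-1})[\bm{H}] = -\bm{A}^{-1}\bm{H}\bm{A}^{-1}$ on each summand of $\nabla F$ gives
$$\nabla^2 F(\overline{\bm{R}})[\bm{H}] = \frac{1}{2K}\sum_{k=1}^K\left(\overline{\bm{R}}^{-1}\bm{H}\,\overline{\bm{R}}^{-1} - \tfrac{1}{2}\left(\tfrac{\overline{\bm{R}}+\bm{R}_k}{2}\right)^{-1}\bm{H}\left(\tfrac{\overline{\bm{R}}+\bm{R}_k}{2}\right)^{-1}\right),$$
while the outlier contribution evaluated at the unperturbed mean is simply
$$\nabla \widetilde{F}(\overline{\bm{R}}) = \frac{1}{2J}\sum_{j=1}^J\left(\left(\tfrac{\overline{\bm{R}}+\bm{P}_j}{2}\right)^{-1} - \overline{\bm{R}}^{-1}\right).$$
The linearized optimality condition therefore reads $\nabla^2 F(\overline{\bm{R}})[\bm{H}] = -\nabla \widetilde{F}(\overline{\bm{R}})$, a Sylvester-type equation for the Hermitian matrix $\bm{H}$ (Hermiticity being imposed as in Proposition \ref{prop:AIRMinf_fun}).

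The main obstacle is converting this Sylvester-type equation into the closed-form product displayed in \eqref{eq:JBLDInfluFun}: the operators $\overline{\bm{R}}^{-1}(\cdot)\overline{\bm{R}}^{-1}$ and $((\overline{\bm{R}}+\bm{R}_k)/2)^{-1}(\cdot)((\overline{\bm{R}}+\bm{R}_k)/2)^{-1}$ do not in general act on $\bm{H}$ as right-multiplications by $\overline{\bm{R}}^{-2}$ and $((\overline{\bm{R}}+\bm{R}_k)/2)^{-2}$, respectively, unless $\bm{H}$ commutes with those matrices. I would bridge this gap by exploiting the Hermiticity of $\overline{\bm{R}}$, $\bm{R}_k$ and $\bm{H}$ to pull $\bm{H}$ to the left of each summand of the Hessian term, thereby rewriting the left-hand side as $\bm{H}\cdot\frac{1}{2K}\sum_k\{\overline{\bm{R}}^{-2}-\frac{1}{2}((\overline{\bm{R}}+\bm{R}_k)/2)^{-2}\}$. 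Once this reduction is in place, multiplying through by $2K$ and inverting the resulting Hermitian matrix on the right yields exactly \eqref{eq:JBLDInfluFun}, and the influence function is then $h=\norm{\bm{H}}$ as claimed.
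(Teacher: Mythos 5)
Your setup coincides with the paper's: the same split of $\nabla G$ into inlier and outlier parts, the same observation that $\nabla F(\overline{\bm{R}})=\bm{0}$ kills the zeroth-order terms, and the same linearized condition
\begin{equation*}
\sum_{k=1}^K\left(\overline{\bm{R}}^{-1}\bm{H}\,\overline{\bm{R}}^{-1}-\left(\tfrac{\overline{\bm{R}}+\bm{R}_k}{2}\right)^{-1}\tfrac{\bm{H}}{2}\left(\tfrac{\overline{\bm{R}}+\bm{R}_k}{2}\right)^{-1}\right)
=\frac{K}{J}\sum_{j=1}^J\left(\overline{\bm{R}}^{-1}-\left(\tfrac{\overline{\bm{R}}+\bm{P}_j}{2}\right)^{-1}\right),
\end{equation*}
which is exactly the equation displayed in Appendix~\ref{app:JBLDH}. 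Up to that point your argument is correct and identical to the paper's.

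The gap is in your final step. Hermiticity of $\overline{\bm{R}}$, $\bm{R}_k$ and $\bm{H}$ does \emph{not} let you pull $\bm{H}$ to the left of each summand: $\bm{A}\bm{H}\bm{A}=\bm{H}\bm{A}^2$ requires $\bm{H}$ to commute with $\bm{A}$, and taking adjoints of $\bm{A}\bm{H}\bm{A}$ merely returns $\bm{A}\bm{H}\bm{A}$, so no rearrangement is gained. As written, this step is false, and without it the Sylvester-type equation has no closed-form solution of the product form \eqref{eq:JBLDInfluFun}. What the paper actually does is weaker: it takes the trace of both sides, where cyclic invariance legitimately converts $\operatorname{tr}(\bm{A}\bm{H}\bm{A})$ into $\operatorname{tr}(\bm{H}\bm{A}^2)$, yielding the scalar identity $\operatorname{tr}(\bm{H}\bm{S})=\operatorname{tr}(\bm{T})$ with $\bm{S}=\sum_k(\overline{\bm{R}}^{-2}-\tfrac12((\overline{\bm{R}}+\bm{R}_k)/2)^{-2})$ and $\bm{T}$ the outlier sum; it then \emph{selects} $\bm{H}=\bm{T}\bm{S}^{-1}$ as a representative consistent with that identity, on the grounds that only $\norm{\bm{H}}$ is of interest. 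If you want to land on \eqref{eq:JBLDInfluFun}, you should replace the commutation claim by this trace argument (and note, as the paper implicitly concedes, that the resulting $\bm{H}$ is a chosen representative rather than the unique solution of the perturbation equation).
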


\begin{proof}
See Appendix \ref{app:JBLDH}.
\end{proof}

\begin{prop}\label{prop:SKLinf_fun}
The influence function of the SKLD mean is $h=\norm{\bm{H}}$ in which we have
\begin{equation}\label{eq:SKLInfluFun}
\begin{aligned}
\bm{H}&=\frac{K}{J}\left(\left(\sum_{k=1}^K\bm{R}_k^{-1}\right)\overline{\bm{R}}+\overline{\bm{R}}\left(\sum_{k=1}^K\bm{R}_k^{-1}\right)\right)^{-1}\\
& \quad \quad \quad \times \sum_{j=1}^J\left(\bm{P}_j-\overline{\bm{R}}\bm{P}_j^{-1}\overline{\bm{R}}\right).
\end{aligned}
\end{equation}
\end{prop}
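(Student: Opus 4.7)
The plan is to mimic the implicit-differentiation strategy used in the proofs of Propositions \ref{prop:LEMinf_fun} and \ref{prop:JBLDinf_fun}: differentiate the stationarity condition $\nabla G(\widehat{\bm{R}})=\bm{0}$ with respect to the contamination weight $\varepsilon$ and solve the resulting linearized matrix equation for $\bm{H}$. First I would compute the Euclidean gradient of the SKLD cost. For a single term $d_S^2(\bm{R},\bm{S})=\operatorname{tr}(\bm{S}^{-1}\bm{R}+\bm{R}^{-1}\bm{S}-2\bm{I})$, the definition \eqref{def:gra} together with $\mathrm{d}(\bm{R}^{-1})=-\bm{R}^{-1}\,\mathrm{d}\bm{R}\,\bm{R}^{-1}$ gives the gradient contribution $\bm{S}^{-1}-\bm{R}^{-1}\bm{S}\bm{R}^{-1}$, and summing produces
\[\nabla G(\bm{R})=\frac{1-\varepsilon}{K}\sum_{k=1}^K\bigl(\bm{R}_k^{-1}-\bm{R}^{-1}\bm{R}_k\bm{R}^{-1}\bigr)+\frac{\varepsilon}{J}\sum_{j=1}^J\bigl(\bm{P}_j^{-1}-\bm{R}^{-1}\bm{P}_j\bm{R}^{-1}\bigr).\]

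Next I would plug in $\widehat{\bm{R}}=\overline{\bm{R}}+\varepsilon\bm{H}+O(\varepsilon^2)$ together with $\widehat{\bm{R}}^{-1}=\overline{\bm{R}}^{-1}-\varepsilon\,\overline{\bm{R}}^{-1}\bm{H}\overline{\bm{R}}^{-1}+O(\varepsilon^2)$ and collect orders in $\varepsilon$. The $O(1)$ contribution vanishes by the SKLD mean characterization $\overline{\bm{R}}\bm{A}\overline{\bm{R}}=\bm{B}$ from Proposition \ref{prop:SKLDm}, with $\bm{A}=\sum_{k=1}^K\bm{R}_k^{-1}$ and $\bm{B}=\sum_{k=1}^K\bm{R}_k$; this identity also yields $\overline{\bm{R}}^{-1}\bm{B}\overline{\bm{R}}^{-1}=\bm{A}$. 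Setting the $O(\varepsilon)$ coefficient of $\nabla G(\widehat{\bm{R}})$ to zero and conjugating both sides by $\overline{\bm{R}}$ produces the Sylvester-type equation
\[\overline{\bm{R}}\bm{A}\,\bm{H}+\bm{H}\,\bm{A}\overline{\bm{R}}=\frac{K}{J}\sum_{j=1}^J\bigl(\bm{P}_j-\overline{\bm{R}}\bm{P}_j^{-1}\overline{\bm{R}}\bigr).\]

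The final step is to invert the linear operator $\bm{H}\mapsto \overline{\bm{R}}\bm{A}\bm{H}+\bm{H}\bm{A}\overline{\bm{R}}$; since $\overline{\bm{R}}\bm{A}$ and $\bm{A}\overline{\bm{R}}$ are Hermitian conjugates of one another and $\bm{H}$ must itself be Hermitian for $\widehat{\bm{R}}$ to remain in $\mathscr{P}(N,\mathbb{C})$, I would read off the solution as the left multiplication of the right-hand side by $(\overline{\bm{R}}\bm{A}+\bm{A}\overline{\bm{R}})^{-1}$, which is exactly the form claimed in \eqref{eq:SKLInfluFun}. The main obstacle is precisely this inversion: a Lyapunov/Sylvester equation $\bm{M}\bm{H}+\bm{H}\bm{M}^{\operatorname{H}}=\bm{C}$ does not in general admit such a compact closed form, so the reduction has to exploit the Hermitian structure of $\bm{H}$ together with the defining identity $\overline{\bm{R}}\bm{A}\overline{\bm{R}}=\bm{B}$ in order to absorb the two-sided action of $\overline{\bm{R}}\bm{A}$ and $\bm{A}\overline{\bm{R}}$ into the single operator $\overline{\bm{R}}\bm{A}+\bm{A}\overline{\bm{R}}$ appearing in the statement.
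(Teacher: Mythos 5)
Your derivation tracks the paper's proof exactly up to the key equation: the same SKLD gradient $\bm{S}^{-1}-\bm{R}^{-1}\bm{S}\bm{R}^{-1}$ (the paper carries an irrelevant overall factor $1/2$), the same perturbation expansion of $\nabla G(\widehat{\bm{R}})=\bm{0}$, and the same Lyapunov/Sylvester equation
\begin{equation*}
\overline{\bm{R}}\Bigl(\sum_{k=1}^K\bm{R}_k^{-1}\Bigr)\bm{H}+\bm{H}\Bigl(\sum_{k=1}^K\bm{R}_k^{-1}\Bigr)\overline{\bm{R}}
=\frac{K}{J}\sum_{j=1}^J\bigl(\bm{P}_j-\overline{\bm{R}}\bm{P}_j^{-1}\overline{\bm{R}}\bigr).
\end{equation*}
The divergence is in the last step, and the obstacle you flag is real: the operator $\bm{H}\mapsto\bm{M}\bm{H}+\bm{H}\bm{M}^{\operatorname{H}}$ with $\bm{M}=\overline{\bm{R}}\sum_k\bm{R}_k^{-1}$ is \emph{not} inverted by left multiplication with $(\bm{M}+\bm{M}^{\operatorname{H}})^{-1}$, and neither the Hermitian structure of $\bm{H}$ nor the identity $\overline{\bm{R}}\bm{A}\overline{\bm{R}}=\bm{B}$ rescues this; the one-sided formula would require $\bm{M}$ and $\bm{H}$ to commute, which there is no reason to expect. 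So the "reduction" you hope for does not exist, and your proposal as written does not close.

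What the paper actually does at this point is weaker than what you are attempting: it explicitly acknowledges that the equation is a continuous Lyapunov equation with no such compact solution, and then, "as we are only interested in the norm of $\bm{H}$," it takes the \emph{trace} of both sides and \emph{chooses} the stated $\bm{H}$ as a representative satisfying the resulting scalar identity $\operatorname{tr}\bigl(\bm{H}(\bm{M}+\bm{M}^{\operatorname{H}})\bigr)=\operatorname{tr}(\bm{C})$ --- the same device used in the LEM and JBLD appendices ("we take trace and choose"). In other words, \eqref{eq:SKLInfluFun} is not claimed to be the exact solution of the Lyapunov equation; it is a trace-consistent surrogate adopted for the purpose of reporting an influence value. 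You should either adopt that convention explicitly (noting that two matrices with equal trace pairing need not have equal Frobenius norm, so this is a modeling choice rather than a theorem about the exact perturbation), or solve the Lyapunov equation properly (e.g., via the integral representation $\bm{H}=\int_0^\infty e^{-t\bm{M}}\bm{C}e^{-t\bm{M}^{\operatorname{H}}}\operatorname{d}\!t$ when $\bm{M}$ is stable), which yields a different, non-closed-form answer.
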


\begin{proof}
See Appendix \ref{app:SKLDH}.
\end{proof}

\subsection{LDA Manifold Projection}
\label{sec:smp1}
In this subsection, we describe the LDA manifold projection for deriving the projection matrix $\bm{W}$. The LDA manifold projection maps HPD matrices from a higher-dimensional HPD matrix manifold to a more discriminative lower-dimensional one by using the sample data with class labels. Labels $0$ and $1$ denote the absence and existence of a target signal, respectively. Specifically, we use training $N\times N$ HPD matrices $\{ \bm{X}_i, x_i \}_{i\in[m]}$ and $\{ \bm{Y}_j, y_j \}_{j\in[n]}$ to perform the supervised learning for the LDA manifold projection, where $\{ \bm{X}_i \}_{i\in[m]}$ denotes the set of HPD matrices calculated by the sample data containing a target signal according to \eqref{eq:AR1} with the class label $\{x_i = 1\}_{i\in[m]}$, and $\{ \bm{Y}_j \}_{j\in[n]}$ represents the set of clutter covariance matrix estimated as the geometric mean of the secondary HPD matrices constructed by the sample data containing only the clutter with the label $\{y_j = 0\}_{j\in[n]}$. The projection is defined by considering the cost function which  resorts  to the between-class and within-class distances as follows:
\begin{equation}\label{eq:psifW}
\psi(\bm{W}) = \varphi_w(f_{\bm{W}}(\bm{X});f_{\bm{W}}(\bm{Y})) - \varphi_b(f_{\bm{W}}(\bm{X});f_{\bm{W}}(\bm{Y})),
\end{equation}
where $\varphi_w$ denotes the within-class distance, $\varphi_b$ denotes  the between-class distance, and for simplicity, we focus on the linear function $f_{\bm{W}}:\mathscr{P}(N,\mathbb{C}) \rightarrow \mathscr{P}(M,\mathbb{C})$ ($N>M$) defined by
\begin{equation}\label{eq:fW}
f_{\bm{W}}(\bm{R})=\bm{W}^{\operatorname{H}}\bm{R}\bm{W}.
\end{equation}
The matrix $\bm{W}\in\mathbb{C}^{N\times M}$ is of maximal rank\footnote{
 For any $\bm{R}\in\mathscr{P}(N,\mathbb{C})$, it can be shown directly that the conjugate symmetric matrix $\bm{W}^{\operatorname{H}}\bm{R}\bm{W}$ is positive definite if  that for any $\bm{x}\in\mathbb{C}^M$, $\bm{W}\bm{x}=\bm{0}$ if and only if $\bm{x}=0$, namely $\operatorname{rank} \bm{W}=M$.}, i.e., $\operatorname{rank} \bm{W}=M$. To maintain spectral information of the matrix $\bm{R}$ as much as possible, we impose a further constraint that
 \begin{equation}
 \bm{W}^{\operatorname{H}}\bm{W}=\bm{I}_M.
 \end{equation}
 Consequently, the matrix $\bm{W}$  is in the Stiefel manifold \eqref{Stmanifold}.
The Stiefel manifold $\operatorname{St}(M,\mathbb{C}^N)$ with $N\geq M$ is a $\left(2NM- M(M+1)\right)$-dimensional embedded submanifold of $\mathbb{C}^{N\times M}$. It is closed, bounded and compact with respect to the induced Frobenius inner product\cite{AMS2008}.

 To improve the discriminative power in the sense of  maximizing the between-class distance while minimizing the within-class distance on the projection manifold 
,  the cost function $\psi(\bm{W})$ will be minimized. Then, learning the manifold mapping can be formulated as an optimization problem on a Stiefel manifold, namely
\begin{equation}\label{eq:opt_problem}
\begin{aligned}
\underset{\bm{W}\in \operatorname{St}(M,\mathbb{C}^N)}{\operatorname{min}} \psi(\bm{W}),
\end{aligned}
\end{equation}
where $\psi(\bm{W})$ is given by Eq. \eqref{eq:psifW}.


For simplicity, we only take the neighbourhood of each HPD matrix $\bm{X}_i\in \mathscr{P}(N,\mathbb{C})$ into account. Denote $\chi^w(\bm{X}_i)$  the set of $\nu^w_i$ number of close neighbours of $\bm{X}_i$ sharing the same labels, and $\chi^b(\bm{X}_i)$  the set of $\nu^b_i$ number of close neighbours of $\bm{X}_i$ with  different labels. In practice, elements of  $\chi^w(\bm{X}_i)$ and $\chi^b(\bm{X}_i)$  can be  chosen as those in closed balls of $ \mathscr{P}(N,\mathbb{C})$ under the Frobenius distance, that are centered at $\bm{X}_i$ and with a proper radius; alternatively, one may simply fix the numbers $\nu^w_i$ and $\nu^b_i$. Further details on how to choose a neighbourhood in simulations are provided in Section \ref{sec:ns}. As a consequence, the functions of within-class distance and between-class distance are respectively given by
\begin{equation*}
\begin{aligned}
\varphi_w(f_{\bm{W}}(\bm{X});f_{\bm{W}}(\bm{Y})) &= \sum_{i=1}^m \sum_{j=1}^{\nu^w_i} d^2(f_{\bm{W}}(\bm{X}_i),f_{\bm{W}}(\bm{X}_j)) \\
&~~~+ \sum_{i=1}^n \sum_{j=1}^{\nu^w_i} d^2(f_{\bm{W}}(\bm{Y}_i),f_{\bm{W}}(\bm{Y}_j))
\end{aligned}
\end{equation*}
and
\begin{equation*}
\begin{aligned}
\varphi_b(f_{\bm{W}}(\bm{X});f_{\bm{W}}(\bm{Y})) &= \sum_{i=1}^m \sum_{j=1}^{\nu^b_i} d^2(f_{\bm{W}}(\bm{X}_i),f_{\bm{W}}(\bm{Y}_j))\\
&~~~+ \sum_{i=1}^n \sum_{j=1}^{\nu^b_i} d^2(f_{\bm{W}}(\bm{Y}_i),f_{\bm{W}}(\bm{X}_j)),
\end{aligned}
\end{equation*}
where $f_{\bm{W}}$ is defined by Eq. \eqref{eq:fW} with $\bm{W}\in  \operatorname{St}(M,\mathbb{C}^N)$ and $d^2(\cdot,\cdot)$ denotes a geometric measure of two HPD matrices. In the next, we will specify it by using the  AIRM distance, the LEM distance, the JBLD, and the SKLD, respectively.

In this paper, we will solve the optimization problem \eqref{eq:opt_problem} by using the Riemannian gradient descent algorithm \cite{Smith1993,Udr1994}; see Appendix \ref{app:SM}. The Riemannian gradient of  a function $\psi(\bm{W})$ defined on the Stiefel manifold $\operatorname{St}(M,\mathbb{C}^N)$ is given by \cite{AMS2008}
\begin{equation}\label{eq:Rgrad}
\operatorname{grad}\psi(\bm{W}) = \nabla \psi(\bm{W})-\bm{W}\times \operatorname{sym}\left(\bm{W}^{\operatorname{H}}\nabla \psi(\bm{W})\right),
\end{equation}
where $\operatorname{sym}(\bm{A})=(\bm{A}+\bm{A}^{\operatorname{H}})/2$
denotes the symmetric part of a matrix $\bm{A}$, and $\nabla\psi(\bm{W})$ is the Euclidean gradient induced from the Frobenius inner product (see \eqref{def:gra}).

To compile the Riemannian gradient descent algorithms corresponding to the geometric measures of our interest,  we must study their Riemannian gradients respectively. It suffices to compute the Euclidean gradient of the function $d^2(f_{\bm{W}}(\bm{X}),f_{\bm{W}}(\bm{Y}))$ about its argument $\bm{W}\in \operatorname{St}(M,\mathbb{C}^N)$ for given (but arbitrary) $\bm{X},\bm{Y}\in\mathscr{P}(N,\mathbb{C})$, that will allow us to express the Euclidean gradient $\nabla \psi(\bm{W})$ simply as a linear combination; finally, the Riemannian gradient $\operatorname{grad} \psi(\bm{W})$ will be obtained through Eq. \eqref{eq:Rgrad}. In the bellow, we will summarize the  Euclidean gradients of the functions $d^2(f_{\bm{W}}(\bm{X}),f_{\bm{W}}(\bm{Y}))$ about $\bm{W}$, with $f_{\bm{W}}$ given by Eq. \eqref{eq:fW}, and $d$ the AIRM distance, the LEM distance, the JBLD and the SKLD, respectively.

\begin{prop}\label{prop:AIRMEg}
The Euclidean gradient of the AIRM distance $d^2_A(f_{\bm{W}}(\bm{X}),f_{\bm{W}}(\bm{Y}))$ about $\bm{W}$ is given by
\begin{equation}
\begin{aligned}
&\frac14 \nabla d_A^2(f_{\bm{W}}(\bm{X}),f_{\bm{W}}(\bm{Y})) \\
&~~~~ =  \bm{Y}\bm{W}\operatorname{Log}\left(\left(\bm{W}^{\operatorname{H}}\bm{X}\bm{W}\right)^{-1}\left(\bm{W}^{\operatorname{H}}\bm{Y}\bm{W}\right)\right)\\
&~~~~~~~~~~~~\times \left(\bm{W}^{\operatorname{H}}\bm{Y}\bm{W}\right)^{-1}  \\
&~~~~~~~~-  \bm{X}\bm{W}  \operatorname{Log} \left(\left(\bm{W}^{\operatorname{H}}\bm{X}\bm{W}\right)^{-1}\left(\bm{W}^{\operatorname{H}}\bm{Y}\bm{W}\right)\right) \\
&~~~~~~~~~~~~\ \times \left(\bm{W}^{\operatorname{H}}\bm{X}\bm{W}\right)^{-1} .
\end{aligned}
\end{equation}
\end{prop}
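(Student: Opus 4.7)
My plan is to compute the first-order variation of $d_A^2(\bm{W}^{\operatorname{H}}\bm{X}\bm{W},\bm{W}^{\operatorname{H}}\bm{Y}\bm{W})$ under $\bm{W}\mapsto\bm{W}+\varepsilon\bm{V}$ and then read off the Euclidean gradient from the linear functional in $\bm{V}$. Introduce the abbreviations $\bm{A}=\bm{W}^{\operatorname{H}}\bm{X}\bm{W}$, $\bm{B}=\bm{W}^{\operatorname{H}}\bm{Y}\bm{W}$, $\bm{M}=\bm{A}^{-1}\bm{B}$, and $\bm{L}=\operatorname{Log}\bm{M}$, so that $d_A^2=\operatorname{tr}(\bm{L}^2)$. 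Although $\bm{M}$ is not Hermitian, the similarity $\bm{M}\sim \bm{A}^{-1/2}\bm{B}\bm{A}^{-1/2}$ to an HPD matrix supplies two structural facts I will exploit throughout: $\bm{L}$ commutes with $\bm{M}$ (both are analytic functions of $\bm{M}$), and each of $\bm{L}\bm{A}^{-1}$ and $\bm{L}\bm{B}^{-1}$ is Hermitian (for instance $\bm{L}\bm{A}^{-1}=\bm{A}^{-1/2}\operatorname{Log}(\bm{A}^{-1/2}\bm{B}\bm{A}^{-1/2})\bm{A}^{-1/2}$).

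The computation proceeds in three steps. First, I would invoke the analytic-function trace identity $\delta\operatorname{tr}(g(\bm{M}))=\operatorname{tr}(g'(\bm{M})\,\delta\bm{M})$ with $g(x)=(\log x)^2$ and $g'(x)=2\log(x)/x$, which gives $\delta d_A^2 = 2\operatorname{tr}(\bm{L}\bm{M}^{-1}\delta\bm{M})$. Second, I would substitute $\delta\bm{M}=-\bm{A}^{-1}\delta\bm{A}\,\bm{M}+\bm{A}^{-1}\delta\bm{B}$ together with $\delta\bm{A}=\bm{V}^{\operatorname{H}}\bm{X}\bm{W}+\bm{W}^{\operatorname{H}}\bm{X}\bm{V}$ and the analogous $\delta\bm{B}$, and use $\bm{M}\bm{L}=\bm{L}\bm{M}$ with cyclic invariance of the trace to collapse the result to $\delta d_A^2 = 2\operatorname{tr}(\bm{L}\bm{B}^{-1}\delta\bm{B})-2\operatorname{tr}(\bm{L}\bm{A}^{-1}\delta\bm{A})$. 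Third, I would expand $\delta\bm{A}$ and $\delta\bm{B}$ and cycle the factors, producing four trace terms that group naturally as $\operatorname{tr}(\bm{G}\bm{V}^{\operatorname{H}})+\operatorname{tr}(\bm{G}^{\operatorname{H}}\bm{V})$ with $\bm{G}=\bm{Y}\bm{W}\bm{L}\bm{B}^{-1}-\bm{X}\bm{W}\bm{L}\bm{A}^{-1}$; here the Hermiticity of $\bm{L}\bm{A}^{-1}$ and $\bm{L}\bm{B}^{-1}$ is precisely what identifies the conjugate partners. Reading off $\nabla d_A^2$ from $\delta d_A^2 = \operatorname{Re}\langle\nabla d_A^2,\bm{V}\rangle$ then yields the stated formula, with the factor of $4$ arising from one doubling supplied by $g'$ and one from combining each trace with its complex conjugate.

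The step I expect to be the main technical hurdle is the first: justifying $\delta\operatorname{tr}(g(\bm{M}))=\operatorname{tr}(g'(\bm{M})\delta\bm{M})$ when $\delta\bm{M}$ does not commute with $\bm{M}$. A clean route is a formal power-series expansion $g(\bm{M})=\sum_k a_k \bm{M}^k$ together with the elementary identity $\delta\operatorname{tr}(\bm{M}^k)=k\operatorname{tr}(\bm{M}^{k-1}\delta\bm{M})$, which sidesteps the non-commutativity by absorbing it into cyclic permutations under the trace; alternatively, one may invoke a Daleckii--Krein spectral representation, diagonalizing through the similarity to the Hermitian matrix $\bm{A}^{-1/2}\bm{B}\bm{A}^{-1/2}$. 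Once this identity is in hand, the remainder of the argument is bookkeeping with the cyclic trace and the two Hermiticity facts.
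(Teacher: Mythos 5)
Your proposal is correct and follows essentially the same route as the paper's Appendix E: perturb $\bm{W}$, differentiate $\operatorname{tr}\bigl(\operatorname{Log}^2(\bm{A}^{-1}\bm{B})\bigr)$ via the chain rule for the matrix logarithm to reach $2\operatorname{tr}\bigl(\bm{L}\bm{M}^{-1}\delta\bm{M}\bigr)$, substitute the derivative of the inverse, and regroup cyclically to read off the gradient. The only cosmetic differences are that the paper justifies the log-derivative step with the integral representation of $\tfrac{\operatorname{d}}{\operatorname{d}\varepsilon}\operatorname{Log}\bm{A}(\varepsilon)$ rather than a power-series/Daleckii--Krein argument, and your explicit use of the Hermiticity of $\bm{L}\bm{A}^{-1}$ and $\bm{L}\bm{B}^{-1}$ to pair conjugate trace terms makes the final factor of $4$ more transparent than the paper's ``move the terms to the left of the logarithm'' step.
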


\begin{proof}
See Appendix \ref{app:AIRMEg}.
\end{proof}

\begin{prop}\label{prop:LEMEg}
The Euclidean gradient of the LEM distance $d^2_L(f_{\bm{W}}(\bm{X}),f_{\bm{W}}(\bm{Y}))$ about $\bm{W}$ is given by
\begin{equation}
\begin{aligned}
&\frac14 \nabla d_L^2(f_{\bm{W}}(\bm{X}),f_{\bm{W}}(\bm{Y})) \\
&~~ =  \bm{X}\bm{W}\bigg(\left(\bm{W}^{\operatorname{H}}\bm{X}\bm{W}\right)^{-1}\operatorname{Log}\left(\bm{W}^{\operatorname{H}}\bm{X}\bm{W}\right)\\
&~~~~ - \int_0^1\left[\left(\bm{W}^{\operatorname{H}}\bm{X}\bm{W}-\bm{I}\right)s+\bm{I}\right]^{-1}\operatorname{Log}\left(\bm{W}^{\operatorname{H}}\bm{Y}\bm{W}\right)\\
&~~~~~~~~~~~~ \times \left[\left(\bm{W}^{\operatorname{H}}\bm{X}\bm{W}-\bm{I}\right)s+\bm{I}\right]^{-1} \operatorname{d}\!s  \bigg)\\
&~~~~  + \bm{Y}\bm{W}\bigg(\left(\bm{W}^{\operatorname{H}}\bm{Y}\bm{W}\right)^{-1}\operatorname{Log}\left(\bm{W}^{\operatorname{H}}\bm{Y}\bm{W}\right)\\
&~~~~ - \int_0^1\left[\left(\bm{W}^{\operatorname{H}}\bm{Y}\bm{W}-\bm{I}\right)s+\bm{I}\right]^{-1}\operatorname{Log}\left(\bm{W}^{\operatorname{H}}\bm{X}\bm{W}\right)\\
&~~~~~~~~~~~~ \times \left[\left(\bm{W}^{\operatorname{H}}\bm{Y}\bm{W}-\bm{I}\right)s+\bm{I}\right]^{-1} \operatorname{d}\!s  \bigg).
\end{aligned}
\end{equation}
\end{prop}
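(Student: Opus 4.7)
The plan is to expand the squared LEM distance in a trace form and then differentiate term by term with respect to $\bm{W}$, using the definition \eqref{def:gra} of the Euclidean gradient through the Frobenius inner product. Writing $\bm{A}(\bm{W})=\bm{W}^{\operatorname{H}}\bm{X}\bm{W}$ and $\bm{B}(\bm{W})=\bm{W}^{\operatorname{H}}\bm{Y}\bm{W}$ for brevity, I would start from
\begin{equation*}
d_L^2(f_{\bm{W}}(\bm{X}),f_{\bm{W}}(\bm{Y}))=\operatorname{tr}\bigl((\operatorname{Log}\bm{A})^2\bigr)+\operatorname{tr}\bigl((\operatorname{Log}\bm{B})^2\bigr)-2\operatorname{tr}(\operatorname{Log}\bm{A}\,\operatorname{Log}\bm{B}).
\end{equation*}
For an arbitrary direction $\bm{V}\in\mathbb{C}^{N\times M}$ the perturbations of the inner matrices are $\delta\bm{A}=\bm{V}^{\operatorname{H}}\bm{X}\bm{W}+\bm{W}^{\operatorname{H}}\bm{X}\bm{V}$ and $\delta\bm{B}=\bm{V}^{\operatorname{H}}\bm{Y}\bm{W}+\bm{W}^{\operatorname{H}}\bm{Y}\bm{V}$, so I would differentiate each of the three traces and then reorganise the result as an inner product with $\bm{V}$.

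For the two \emph{pure} traces, because $\bm{A}$ and $\operatorname{Log}\bm{A}$ commute, the standard scalar-function rule gives $\delta\operatorname{tr}\bigl((\operatorname{Log}\bm{A})^2\bigr)=2\operatorname{tr}\bigl(\bm{A}^{-1}\operatorname{Log}(\bm{A})\,\delta\bm{A}\bigr)$; inserting $\delta\bm{A}$, using the cyclic property of trace and the Hermitian symmetry of $\bm{X}$, $\bm{A}^{-1}$ and $\operatorname{Log}\bm{A}$, the two summands are complex conjugates and combine to $4\operatorname{Re}\operatorname{tr}\bigl(\bm{X}\bm{W}\bm{A}^{-1}\operatorname{Log}(\bm{A})\,\bm{V}^{\operatorname{H}}\bigr)$. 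Matching with the gradient convention then produces the contribution $4\,\bm{X}\bm{W}\bigl(\bm{W}^{\operatorname{H}}\bm{X}\bm{W}\bigr)^{-1}\operatorname{Log}\bigl(\bm{W}^{\operatorname{H}}\bm{X}\bm{W}\bigr)$, and an identical argument on $\bm{B}$ yields the corresponding $\bm{Y}$-term. These are exactly the two ``$(\bm{W}^{\operatorname{H}}(\cdot)\bm{W})^{-1}\operatorname{Log}(\bm{W}^{\operatorname{H}}(\cdot)\bm{W})$'' pieces in the stated formula, and they already account for the normalisation factor $\tfrac14$.

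The hard part is the cross term $-2\operatorname{tr}(\operatorname{Log}\bm{A}\,\operatorname{Log}\bm{B})$, for which $\operatorname{Log}\bm{A}$ and $\operatorname{Log}\bm{B}$ do \emph{not} commute with their arguments in a mixed product, so I cannot reuse the scalar rule. Instead I would invoke the Fréchet/Daleckii--Krein integral representation of the matrix logarithm,
\begin{equation*}
D\operatorname{Log}(\bm{A})[\bm{E}]=\int_0^1\bigl[(\bm{A}-\bm{I})s+\bm{I}\bigr]^{-1}\bm{E}\bigl[(\bm{A}-\bm{I})s+\bm{I}\bigr]^{-1}\operatorname{d}\!s,
\end{equation*}
apply the chain rule in both $\operatorname{Log}\bm{A}$ and $\operatorname{Log}\bm{B}$, and use the cyclic property to move $\delta\bm{A}$ (resp.\ $\delta\bm{B}$) to the left inside the integrand. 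The same Hermitian-conjugation trick as above collapses the two contributions of $\delta\bm{A}$ into $2\operatorname{Re}\operatorname{tr}\bigl(\bm{X}\bm{W}\,\bm{K}(s)\,\bm{V}^{\operatorname{H}}\bigr)$ with $\bm{K}(s)=[(\bm{A}-\bm{I})s+\bm{I}]^{-1}\operatorname{Log}(\bm{B})[(\bm{A}-\bm{I})s+\bm{I}]^{-1}$; here the Hermiticity $\bm{K}(s)^{\operatorname{H}}=\bm{K}(s)$ is what makes the two summands conjugate, and the same story applied to the $\delta\bm{B}$ piece yields the $\bm{Y}\bm{W}$ integral. Summing the three contributions and dividing by $4$ gives precisely the formula claimed in the proposition.

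The main obstacle is therefore purely technical: keeping the Fréchet derivative of the matrix logarithm in a usable form through the chain rule, and correctly exploiting the Hermitian symmetry of $\bm{A}$, $\bm{B}$, and the resolvents $[(\bm{A}-\bm{I})s+\bm{I}]^{-1}$ to fold the two summands of $\delta\bm{A}$ and $\delta\bm{B}$ into a single Frobenius-inner-product expression. Once this bookkeeping is done, the factor $\tfrac14$ and the precise non-symmetric structure (left multiplication by $\bm{X}\bm{W}$ or $\bm{Y}\bm{W}$, the $\operatorname{Log}$ of the ``other'' projected matrix sandwiched between the two resolvents) fall out automatically.
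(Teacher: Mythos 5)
Your argument is correct and complete: the three-term trace expansion of $\norm{\operatorname{Log}\bm{A}-\operatorname{Log}\bm{B}}^2$, the integral representation \eqref{eq:logid} applied to the cross term, the identity \eqref{eq:intX} collapsing the pure terms to $\bm{A}^{-1}\operatorname{Log}\bm{A}$, and the conjugate-pairing of the two summands of $\delta\bm{A}$ (valid because the resolvents and $\operatorname{Log}\bm{B}$ are Hermitian) all check out and reproduce the stated formula with the factor $\tfrac14$. The paper gives no proof of this proposition, but your route is precisely the one used in its Appendix E for the AIRM gradient (same definition \eqref{def:gra}, same identities), so there is nothing to add.
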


\begin{prop}
The Euclidean gradient of the JBLD  $d^2_J(f_{\bm{W}}(\bm{X}),f_{\bm{W}}(\bm{Y}))$ about $\bm{W}$ is given by
\begin{equation}
\begin{aligned}
& \nabla d_J^2(f_{\bm{W}}(\bm{X}),f_{\bm{W}}(\bm{Y})) \\
&~~~= 2\left(\bm{X}+\bm{Y}\right)\bm{W}\left(\bm{W}^{\operatorname{H}}\bm{X}\bm{W}+\bm{W}^{\operatorname{H}}\bm{Y}\bm{W}\right)^{-1}\\
&~~~~~~~ - \bm{X}\bm{W}\left(\bm{W}^{\operatorname{H}}\bm{X}\bm{W}\right)^{-1}-\bm{Y}\bm{W}\left(\bm{W}^{\operatorname{H}}\bm{Y}\bm{W}\right)^{-1}.
\end{aligned}
\end{equation}
\end{prop}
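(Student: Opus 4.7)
The plan is to differentiate the expression
\[
d_J^2(f_{\bm{W}}(\bm{X}),f_{\bm{W}}(\bm{Y}))=\ln\det\bigl(\bm{W}^{\operatorname{H}}(\bm{X}+\bm{Y})\bm{W}\bigr)-\tfrac12\ln\det\bigl(\bm{W}^{\operatorname{H}}\bm{X}\bm{W}\bigr)-\tfrac12\ln\det\bigl(\bm{W}^{\operatorname{H}}\bm{Y}\bm{W}\bigr)-M\ln 2,
\]
obtained by substituting $f_{\bm{W}}(\bm{R})=\bm{W}^{\operatorname{H}}\bm{R}\bm{W}$ into Definition 2, and by using the directional-derivative definition of the Euclidean gradient given in \eqref{def:gra}. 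The constant $-M\ln 2$ drops out, so the task reduces to computing the gradient of a generic log-determinant $\ln\det(\bm{W}^{\operatorname{H}}\bm{A}\bm{W})$ for fixed $\bm{A}\in\mathscr{P}(N,\mathbb{C})$ and combining three such contributions linearly.

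For the building block I would apply Jacobi's formula $\frac{\operatorname{d}}{\operatorname{d}\!t}\ln\det \bm{M}(t)=\operatorname{tr}(\bm{M}^{-1}\bm{M}')$ to the perturbation $\bm{M}(\varepsilon)=(\bm{W}+\varepsilon\bm{Z})^{\operatorname{H}}\bm{A}(\bm{W}+\varepsilon\bm{Z})$, which yields
\[
\frac{\operatorname{d}}{\operatorname{d}\!\varepsilon}\Big|_{\varepsilon=0}\ln\det \bm{M}(\varepsilon)=\operatorname{tr}\bigl((\bm{W}^{\operatorname{H}}\bm{A}\bm{W})^{-1}(\bm{Z}^{\operatorname{H}}\bm{A}\bm{W}+\bm{W}^{\operatorname{H}}\bm{A}\bm{Z})\bigr).
\]
Since $\bm{A}$ and $(\bm{W}^{\operatorname{H}}\bm{A}\bm{W})^{-1}$ are Hermitian, the two trace contributions are complex conjugates of one another; pairing the resulting real expression against the Frobenius inner product $\langle\cdot,\cdot\rangle=\operatorname{tr}(\cdot^{\operatorname{H}}\cdot)$ as prescribed in \eqref{def:gra} identifies
\[
\nabla_{\bm{W}}\ln\det\bigl(\bm{W}^{\operatorname{H}}\bm{A}\bm{W}\bigr)=2\,\bm{A}\bm{W}\bigl(\bm{W}^{\operatorname{H}}\bm{A}\bm{W}\bigr)^{-1}.
\]

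Finally, by linearity I would apply this formula with $\bm{A}=\bm{X}+\bm{Y}$, $\bm{A}=\bm{X}$, and $\bm{A}=\bm{Y}$ with coefficients $1$, $-\tfrac12$, $-\tfrac12$. The factor $2$ produced by the building block cancels the $\tfrac12$'s on the last two terms, and using $\bm{W}^{\operatorname{H}}(\bm{X}+\bm{Y})\bm{W}=\bm{W}^{\operatorname{H}}\bm{X}\bm{W}+\bm{W}^{\operatorname{H}}\bm{Y}\bm{W}$ reproduces exactly the stated identity. I do not expect any real obstacle here: unlike the AIRM and LEM cases (Propositions \ref{prop:AIRMEg}–\ref{prop:LEMEg}), which required differentiating a matrix logarithm via the integral representation and hence picked up a normalizing $1/4$, the JBLD is built only from $\ln\det$ terms, so Jacobi's formula suffices and no logarithmic derivative enters. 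The only point that demands care is consistency of the complex-gradient convention, i.e.\ the overall factor of $2$ in the building block, which must match the convention already used in the earlier propositions; once that is verified, the derivation is a direct bookkeeping exercise.
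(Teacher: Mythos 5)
Your proposal is correct and follows essentially the same route as the paper, which likewise verifies the formula directly from the definition \eqref{def:gra} using Jacobi's formula for the derivative of a determinant; you have simply written out the log-determinant decomposition and the resulting bookkeeping that the paper leaves implicit. The factor-of-$2$ convention you flag is indeed consistent with the earlier propositions, so no adjustment is needed.
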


\begin{proof}
It can be verified directly from the definition \eqref{def:gra} by taking the following identity into account that
\begin{equation}
\frac{\operatorname{d}}{\operatorname{d}\!\varepsilon}\det \bm{A}(\varepsilon) = \det \bm{A}(\varepsilon) \operatorname{tr}\left( \bm{A}^{-1}(\varepsilon)\frac{\operatorname{d}}{\operatorname{d}\!\varepsilon}\bm{A}(\varepsilon)\right)
\end{equation}
holds for any invertible matrix $\bm{A}(\varepsilon)$.
\end{proof}

\begin{prop}
The Euclidean gradient of the SKLD  $d^2_S(f_{\bm{W}}(\bm{X}),f_{\bm{W}}(\bm{Y}))$ about $\bm{W}$ is given by
\begin{equation*}
\begin{aligned}
&\frac12 \nabla d_S^2(f_{\bm{W}}(\bm{X}),f_{\bm{W}}(\bm{Y})) =  \bm{X}\bm{W}\bigg[\left(\bm{W}^{\operatorname{H}}\bm{Y}\bm{W}\right)^{-1}  \\
&~~~~~~ -  \left(\bm{W}^{\operatorname{H}}\bm{X}\bm{W}\right)^{-1}   \left(\bm{W}^{\operatorname{H}}\bm{Y}\bm{W}\right)  \left(\bm{W}^{\operatorname{H}}\bm{X}\bm{W}\right)^{-1}   \bigg]\\
&~~+\bm{Y}\bm{W}\bigg[\left(\bm{W}^{\operatorname{H}}\bm{X}\bm{W}\right)^{-1} \\
&~~~~~~ -\left(\bm{W}^{\operatorname{H}}\bm{Y}\bm{W}\right)^{-1}\left(\bm{W}^{\operatorname{H}}\bm{X}\bm{W}\right)\left(\bm{W}^{\operatorname{H}}\bm{Y}\bm{W}\right)^{-1}\bigg].
\end{aligned}
\end{equation*}
\end{prop}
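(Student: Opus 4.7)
The plan is to apply the gradient definition \eqref{def:gra} directly, namely to compute the first-order variation of $d_S^2(f_{\bm{W}}(\bm{X}),f_{\bm{W}}(\bm{Y}))$ under $\bm{W}\to \bm{W}+\varepsilon \bm{H}$ and then read off the Euclidean gradient from the Frobenius pairing. To make the bookkeeping manageable I would introduce the shorthand $\bm{P}:=\bm{W}^{\operatorname{H}}\bm{X}\bm{W}$ and $\bm{Q}:=\bm{W}^{\operatorname{H}}\bm{Y}\bm{W}$, both of which lie in $\mathscr{P}(M,\mathbb{C})$ when $\bm{W}\in\operatorname{St}(M,\mathbb{C}^N)$, so that
\begin{equation*}
d_S^2(f_{\bm{W}}(\bm{X}),f_{\bm{W}}(\bm{Y})) = \operatorname{tr}(\bm{Q}^{-1}\bm{P}) + \operatorname{tr}(\bm{P}^{-1}\bm{Q}) - 2M.
\end{equation*}

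Next, I would perturb $\bm{W}\to\bm{W}+\varepsilon\bm{H}$ and compute the first-order variations
\begin{equation*}
\dot{\bm{P}} = \bm{H}^{\operatorname{H}}\bm{X}\bm{W}+\bm{W}^{\operatorname{H}}\bm{X}\bm{H}, \qquad \dot{\bm{Q}} = \bm{H}^{\operatorname{H}}\bm{Y}\bm{W}+\bm{W}^{\operatorname{H}}\bm{Y}\bm{H},
\end{equation*}
and then apply the matrix-inverse derivative identity $\tfrac{\operatorname{d}}{\operatorname{d}\!\varepsilon}\bm{A}^{-1}=-\bm{A}^{-1}\dot{\bm{A}}\bm{A}^{-1}$ together with linearity of the trace to get
\begin{equation*}
\frac{\operatorname{d}}{\operatorname{d}\!\varepsilon}\operatorname{tr}(\bm{Q}^{-1}\bm{P})\Big|_{\varepsilon=0}=\operatorname{tr}(\bm{Q}^{-1}\dot{\bm{P}})-\operatorname{tr}(\bm{Q}^{-1}\dot{\bm{Q}}\bm{Q}^{-1}\bm{P}),
\end{equation*}
and the symmetric counterpart for $\operatorname{tr}(\bm{P}^{-1}\bm{Q})$.

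From here the work is almost entirely cyclic-trace algebra: using the cyclic property, each of the resulting trace expressions can be rewritten in either the form $\operatorname{tr}(\bm{M}\bm{H})$ or $\operatorname{tr}(\bm{H}^{\operatorname{H}}\bm{N})$. Since $\bm{X}$, $\bm{Y}$, $\bm{P}^{-1}$, $\bm{Q}^{-1}$ are all Hermitian, the two kinds of contributions must pair into Hermitian conjugates of each other, so after summation they assemble into $\operatorname{tr}(\bm{G}^{\operatorname{H}}\bm{H})+\operatorname{tr}(\bm{H}^{\operatorname{H}}\bm{G})$, and identifying the gradient through $\langle \nabla\psi,\bm{H}\rangle = \operatorname{tr}(\nabla\psi^{\operatorname{H}}\bm{H})$ yields $\nabla\psi = 2\bm{G}$. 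Gathering the $\bm{X}$-type terms (from $\operatorname{tr}(\bm{Q}^{-1}\dot{\bm{P}})$ and from $-\operatorname{tr}(\bm{P}^{-1}\dot{\bm{P}}\bm{P}^{-1}\bm{Q})$) yields $\bm{X}\bm{W}(\bm{Q}^{-1}-\bm{P}^{-1}\bm{Q}\bm{P}^{-1})$, and symmetrically the $\bm{Y}$-type terms yield $\bm{Y}\bm{W}(\bm{P}^{-1}-\bm{Q}^{-1}\bm{P}\bm{Q}^{-1})$, matching the stated formula with the overall prefactor $\tfrac12$.

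The main obstacle is purely the bookkeeping: one must carefully choose which cyclic rearrangement of each trace exposes $\bm{H}$ (or $\bm{H}^{\operatorname{H}}$) on the outside, and then verify that the coefficients arising from the two rearrangements are indeed Hermitian transposes of each other, which is guaranteed because $d_S^2$ is real-valued but is still worth checking as a consistency sanity check. No analytic subtlety beyond the matrix-inverse and trace-derivative rules is required, so the proof reduces to a direct but attentive calculation.
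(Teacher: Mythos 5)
Your proposal is correct and follows essentially the same route as the paper, whose proof of this proposition consists only of the remark that the result ``can be straightforwardly computed from the definition \eqref{def:gra}''; you have simply filled in the omitted details (the perturbation of $\bm{W}^{\operatorname{H}}\bm{X}\bm{W}$ and $\bm{W}^{\operatorname{H}}\bm{Y}\bm{W}$, the inverse-derivative identity, and the cyclic-trace pairing of the $\bm{H}$ and $\bm{H}^{\operatorname{H}}$ contributions), and the resulting gradient matches the stated formula, including the factor $\tfrac12$.
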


\begin{proof}
It can be straightforwardly computed from the definition \eqref{def:gra}. Details are omitted.
\end{proof}

\section{PERFORMANCE ASSESSMENT}
\label{sec:ns}

This section is dedicated to investigating the performance of the proposed LDA-MIG detectors in terms of their robustness to outliers as well as the probability of detection $P_d$. For comparison purposes, we also show the performance of MIG detectors without the LDA, the adaptive matched filter (AMF), the moving target detection (MTD), the adaptive coherence estimator (ACE) \cite{1381736}, the weighted amplitude iteration (WAI)-based constant false alarm rate (CFAR) \cite{7859306}. The analysis is conducted resorting to both simulated and real radar data.

To assess the robustness and $P_d$, $2000$ independent trials are repeated by the standard Monte Carlo counting techniques. The detection threshold is estimated to ensure a preassigned probability of false alarm $P_{fa}$ by using the $100/P_{fa}$ independent trials. The sample data is generated according to an $N$-dimensional zero mean complex circular Gaussian distribution with a known covariance matrix
\begin{equation}\label{eq:samdata}
\bm{C} = \sigma_c^2\bm{C}_0 + \sigma_n^2 \bm{I},
\end{equation}
where $\sigma_c^2\bm{C}_0$ denotes the clutter component with $\sigma_c^2$ the clutter power while $\sigma_n^2\bm{I}$ is the thermal noise component with $\sigma_n^2$ the noise power. Thus, the clutter-to-noise ratio (CNR) is given by CNR$=\sigma_c^2/\sigma_n^2$. The structure of the clutter covariance matrix $\bm{C}_0$ is Gaussian shaped with one-lag correlation coefficient $\rho$, and its entry is given by $[\bm{C}_0]_{i,j} = \rho^{| i-j|} \exp(\operatorname{i}2\pi f_c (i-j)), i,j\in[N]$ with $f_c$ the clutter normalized Doppler frequency. A set of $K$ secondary HPD matrices computed by \eqref{eq:AR1} is employed to estimate the clutter covariance matrix matrix by the geometric mean $\bm{R}_\mathcal{G}$. The sample data $\bm{y}_D$ is used to compute the $\bm{R}_D$ of the CUT. In the following, we set $\sigma_n^2 = 1$, CNR=$25$ dB, $\rho = 0.95$ and $f_c = 0.1$.

For the sake of clarification, the AMF and ACE are respectively given by
\begin{equation}\label{AMF}
\begin{aligned}
\Lambda_{AMF} &= \frac{| \bm{y}_D^{\operatorname{H}}\bm{\widehat{M}}_1^{-1}\bm{s}|^2}{\bm{s}^{\operatorname{H}}\bm{\widehat{M}}_1^{-1}\bm{s}} \mathop{\gtrless}\limits_{\mathcal{H}_0}^{\mathcal{H}_1} \gamma_{AMF}, \\
\Lambda_{ACE} &= \frac{| \bm{y}_D^{\operatorname{H}}\bm{\widehat{M}}_2^{-1}\bm{s}|^2}{(\bm{s}^{\operatorname{H}}\bm{\widehat{M}}_2^{-1}\bm{s})(\bm{y}_D^{\operatorname{H}}\bm{\widehat{M}}_2^{-1}\bm{y}_D)}\mathop{\gtrless}\limits_{\mathcal{H}_0}^{\mathcal{H}_1} \gamma_{ACE}.
\end{aligned}
\end{equation}
Here, $\gamma_{AMF}$ and $\gamma_{ACE}$ are the detection thresholds. $\bm{\widehat{M}}_1$ and $\bm{\widehat{M}}_2$ denote the SCM and normalized SCM estimators, which are derived by the maximum likelihood estimation of the secondary sample data. The SCM estimator is given by \eqref{eq:SCM}.


Similarly, the LDA-MIG detectors with respect to the geometric measures are given by
\begin{equation}
\Lambda_\mathcal{G} = d_{\mathcal{G}}^2(f_{\bm{W}}(\bm{R}_D),f_{\bm{W}}(\bm{R}_\mathcal{G})) \mathop{\gtrless}\limits_{\mathcal{H}_0}^{\mathcal{H}_1} \gamma_\mathcal{G}, \\
\end{equation}
where  the function $f_{\bm{W}}(\bm{R})$ is given by \eqref{eq:fW} with $\bm{W}$ to be learnt using the manifold projection, $\mathcal{G}$ denotes the AIRM, the LEM,  the JBLD  or the SKLD, and $\bm{R}_{AIRM}, \bm{R}_{LEM}, \bm{R}_{JBLD}$ and $\bm{R}_{SKLD}$ are the geometric means that are used as the clutter covariance matrix estimators.

\subsection{Robustness Analysis}

Firstly, we analyze the robustness of geometric means to outliers and compare them with the SCM estimator. A number of $K=50$ sample data is generated and their corresponding HPD matrices are  computed via Eq. \eqref{eq:AR1}. A number of $L$ interferences are injected into the $K$  HPD data. Similar to the clutter data, interferences are generated by a zero mean complex circular Gaussian distribution with the covariance matrix $\bm{C}_I = \bm{C} + \sigma_I^2\bm{s}\bm{s}^{\operatorname{H}}$, where $\bm{C}$ is given by Eq. \eqref{eq:samdata}, $\sigma_I^2$ is the interference power, and $\bm{s}$ is the steering vector. The interference-to-clutter ratio is defined by $\sigma_I^2/\sigma_n^2$.
The signal-to-clutter ratio (SCR) is defined by
\begin{equation}
\textrm{SCR} = |\alpha|^2 \bm{s}^{\operatorname{H}}\bm{\widehat{M}} ^{-1}\bm{s},
\end{equation}
where the normalized Doppler frequency of the interference is $f_I = 0.22$, and $\bm{\widehat{M}} $ is the SCM estimator given by \eqref{eq:SCM}. The dimension of the sample data is $N=8$. The number of interference $L$ varies from $1$ to $40$.

Values of the influence functions for the AIRM, JBLD, LEM and SKLD means are respectively given by Eqs. \eqref{eq:AIRMInfluFun}, \eqref{eq:LEMInfluFun}, \eqref{eq:JBLDInfluFun}, and \eqref{eq:SKLInfluFun}. The influence function with respect to SCM can be computed in a similar way. $2000$ times of computation is repeated and then averaged. The influence values of the AIRM, LEM, JBLD and SKLD means as well as the SCM is shown in Fig. \ref{Robust_Outliers}. It is clear that the SCM leads to much larger influence value than the geometric means, and the JBLD mean has the lowest influence value. It suggests that the SCM is much more sensitive to outliers than the geometric means and the JBLD mean is most robust. In addition, the AIRM mean has a similar robustness to the LEM mean, and both of them have better robustness than the SKLD mean.

\begin{figure}[ht]
  \centering
  \includegraphics[width = 8.5cm]{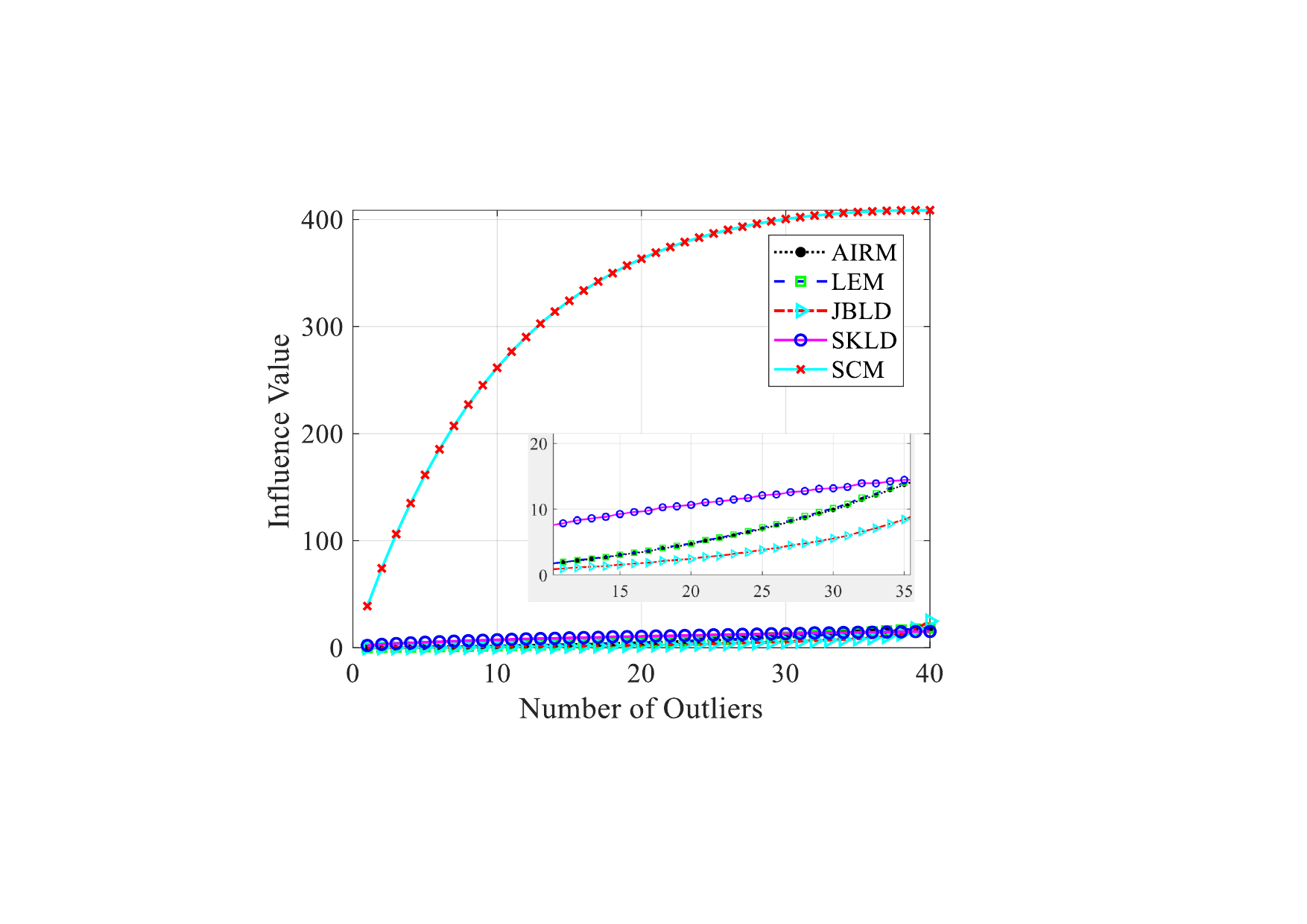}\\
  \caption{Influence values of geometric means}
  \label{Robust_Outliers}
\end{figure}

\subsection{Simulation Results}

We now examine the performance of the proposed LDA-MIG detectors with different sizes of $K$ secondary data, where $K=8$ and $16$, respectively. The dimension of the sample data is chosen to be $N=8$. The $P_{fa}$ is set to be $10^{-5}$. Two interferences are injected into the secondary data with the normalized Doppler frequency $f_I = 0.22$ and the interference power $30$ dB. The normalized Doppler frequency of target signal is given by $f_d = 0.2$.
The sample data in the CUT is generated according to an $N$-dimensional zero mean complex circular Gaussian distribution with a known covariance matrix $\bm{C}_t = \tau\bm{C} + \bm{q}\bm{q}^{\operatorname{H}}$, where $\bm{C}$ is the covariance matrix in the reference cell that is given by \eqref{eq:samdata}, $\tau$ is the nonhomogeneous power coefficient and is set to be $1.2$, and $\bm{q}$ is a random vector.
The training dataset is composed of the following two sets: the set of HPD matrices with a signal and the set of clutter covariance matrix; the size of each set is $1000$. By using the training dataset, we derive the three projection matrices that transforms the HPD matrices from the $8$-dimensional manifold to $M=6, 4, 2$, and $1$-dimensional manifolds, respectively. Furthermore, the learning is conducted with the number of neighbours sharing the same labels $\nu^w_i=15$, and the  number of neighbours  with  different labels $\nu^b_i=20$. The set of closest matrices ($15$ or $20$ of them) is selected with respect to the same distance/divergence function that is used in the detection.
Figs. \ref{K_8_N_8}, and \ref{K_16_N_16} plot the $P_d$ versus SCR for the LDA-MIG detectors and their corresponding counterparts as well as the AMF, ACE, MTD and WAI-CFAR under different size of secondary data, which are commonly used for maritime target detection in nonhomogeneous sea clutter.
Unlike the AMF that the optimal performance is the AMF with the known clutter covariance matrix, the optimal performance of MIG detectors is not the MIG detectors with the known clutter covariance matrix since the detection performance is closely related to the discrimination between the target signal and the clutter.

It is clear from Figs. \ref{K_8_N_8}, and \ref{K_16_N_16} that the detection performances of all the considered detectors are improved as $K$ becomes larger. In Fig. \ref{K_8_N_8}, the MIG detectors still work well while the $P_d$ of the AMF and ACE are very low; this is because the estimate accuracy of the SCM and normalized SCM are worse when $K=M$. It is also noted that all the LDA-MIG detectors have better performance than their corresponding MIG detectors, and the MIG detectors outperform the AMF, MTD, ACE and WAI-CFAR detectors. Specifically, for $K=M$ in simulated data, the $P_d$ of AMF and ACE are almost zero due to poor estimate accuracy of the clutter covariance matrix. However, the LDA-MIG, MIG, MTD and WAI-CFAR detectors can still work: the LDA-MIG detectors have the best performance, followed by the MIG detectors. The WAI-CFAR detector has better performance than the ACE. For $K=2M$ in simulated data, the ACE has better performance than the WAI-CFAR detector, and both perform better than the AMF and MTD. Without surprising, the LDA-MIG detectors have better performance than the MIG detectors and the conventional detectors.
The results demonstrate that the LDA manifold projection can indeed promote the discriminative power of the HPD matrices.
In more details, for the LDA-AIRM,  LDA-LEM and LDA-SLKD detectors has the best detection performance with $M=1$ and $K=M$. However, this phenomenon does not occur for the LDA-LEM detectors with $K=2M$. For the LDA-JBLD detectors, detection performance of $M=4, 2, 1$ are almost the same, and all these are better than $M=6$. A possible explanation is that the major useful information can be well preserved by the manifold projection while the redundant information can be largely cut down. To analytically determine the optimal dimension for the manifold projection is, however, very challenging, as both useful and redundant information may be reduced during the projection but their ratio is difficult to be quantified.

\begin{figure*}[!t]
\centering
\includegraphics[width=15cm,angle=0]{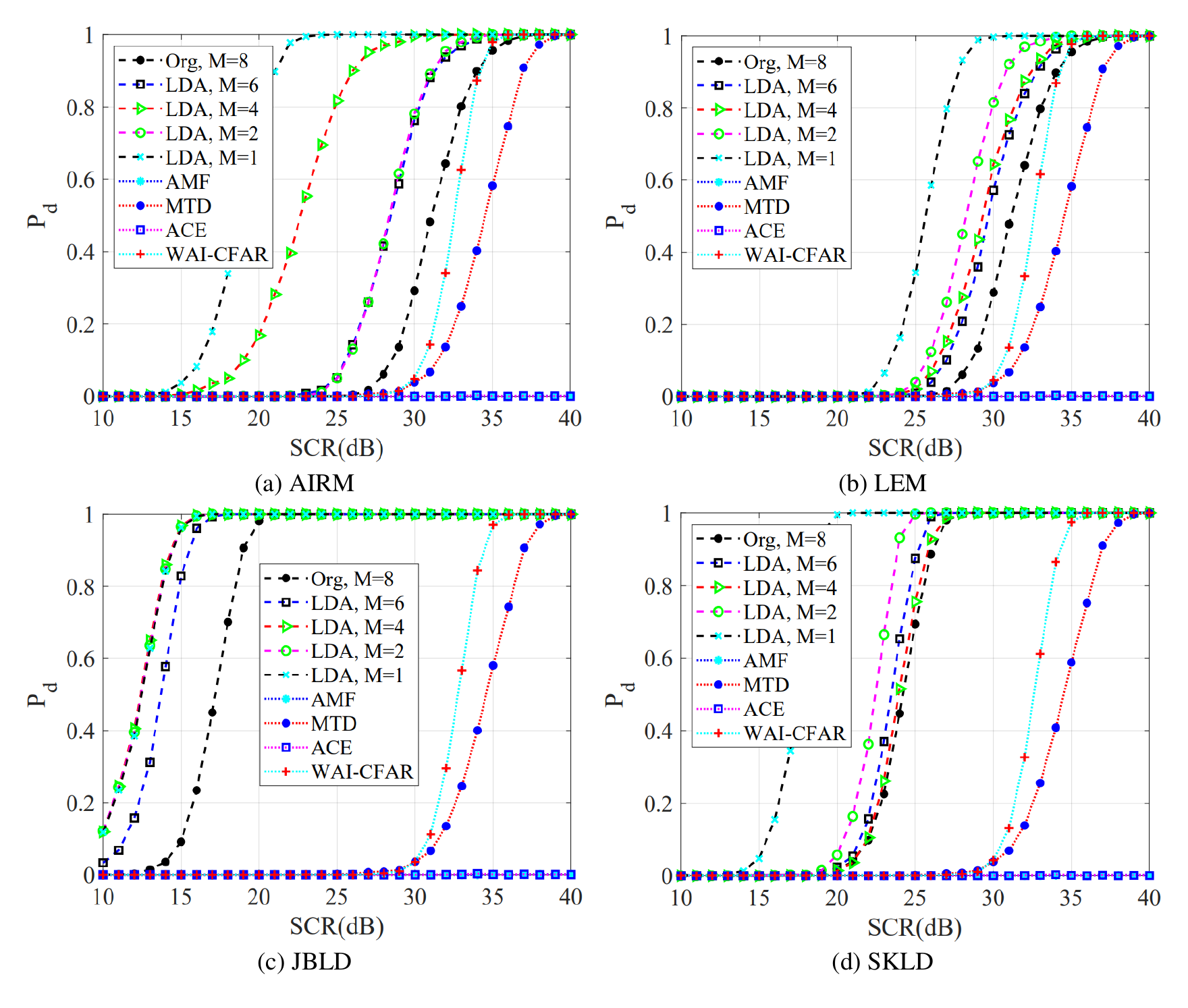}
\caption{$P_d$ vs SCR, $K=M$}
\label{K_8_N_8}
\end{figure*}

\begin{figure*}[!t]
\centering
\includegraphics[width=15cm,angle=0]{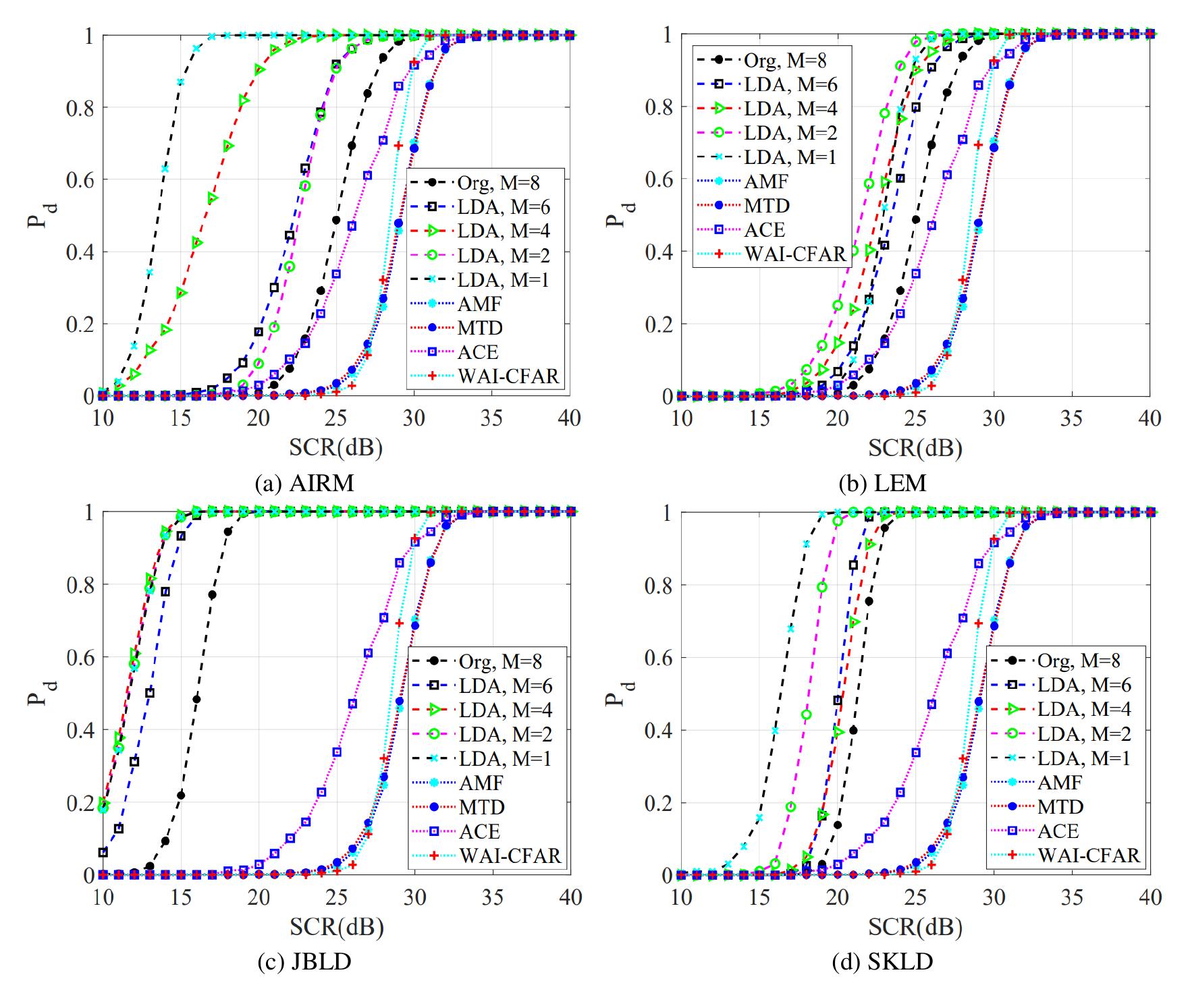}
\caption{$P_d$ vs SCR, $K=2M$}
\label{K_16_N_16}
\end{figure*}

\subsection{Real-data Results}

We now provide experiments on the real radar data to validate the advantage of the proposed detectors. The real data used in the experiments is
collected from the McMaster University IPIX radar, which is measured on the shore of Lake Ontario, between Toronto and Niagara Falls, Grimsby, Canada, in winter $1998$. We exploit the data file `$19980205\_183709\_antstep.cdf$' that only contains the sea clutter data to assess the detection performance. The power of this data is shown in Fig. \ref{Real_Data}.
This dataset consists of $28$ range cells in the fast time or range dimension, $60000$ pulses in the slow-time dimension with a pulse repetition frequency of $1000$ Hz, and refers to different polarizations; we only show the results corresponding to horizontal polarization only. Further details on the description of the dataset can be found in \cite{892695} and the references therein. Similar to the simulation part, we add a target signal in the $14$-th range cell with the target normalized Doppler frequency $f_d = 0.2$ and inject two interferences into the secondary data with the interference normalized Doppler frequency $f_I = 0.22$. The dimension of the sample data is set to $N = 5$, and the secondary data, $K = 10$ sample data around the target cell, is used to estimate the clutter covariance matrix by \eqref{eq:AR1}. We divide the dataset into $12000$ groups along the slow-time dimension. The training data is selected by the first $500$ groups with a target signal SCR $=25$ dB in the target cell. We exploit the training data to learn the projection matrix by resorting to the AIRM, the JBLD, the LEM and the SKLD. The next $10000$ groups are used to estimate the detection threshold by the Monte Carlo technique and the $P_{fa}$ is set to $10^{-4}$. The last groups are employed to estimate the $P_d$ under different SCRs.

\begin{figure*}[!t]
\centering
\includegraphics[width=8.5cm,angle=0]{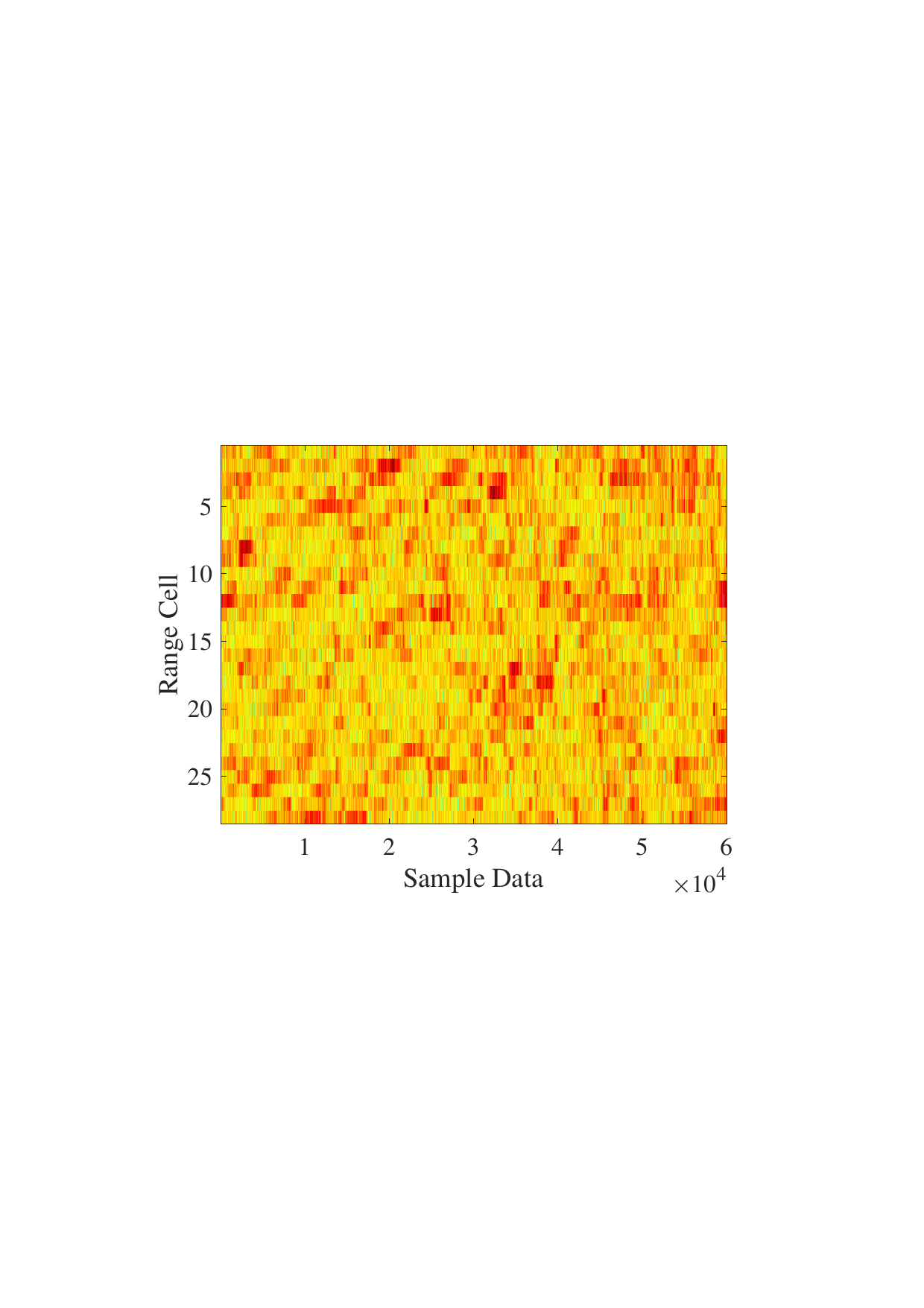}
\caption{Power of the real data.}
\label{Real_Data}
\end{figure*}

Fig. \ref{Pd_Real} shows the detection results of the considered detectors in real IPIX radar data. It is shown that the MIG detectors outperform the conventional detectors. All LDA-MIG detectors have better performance than their corresponding MIG detectors. Among the conventional detectors, the WAI-CFAR detector has the best performance, followed by the ACE, and the AMF has the worst performance.
In addition, as dimension of the reduced HPD matrix manifold decreases, the detection performance improves for the LDA-MIG detectors.
\begin{figure*}[!t]
\centering
\includegraphics[width=15cm,angle=0]{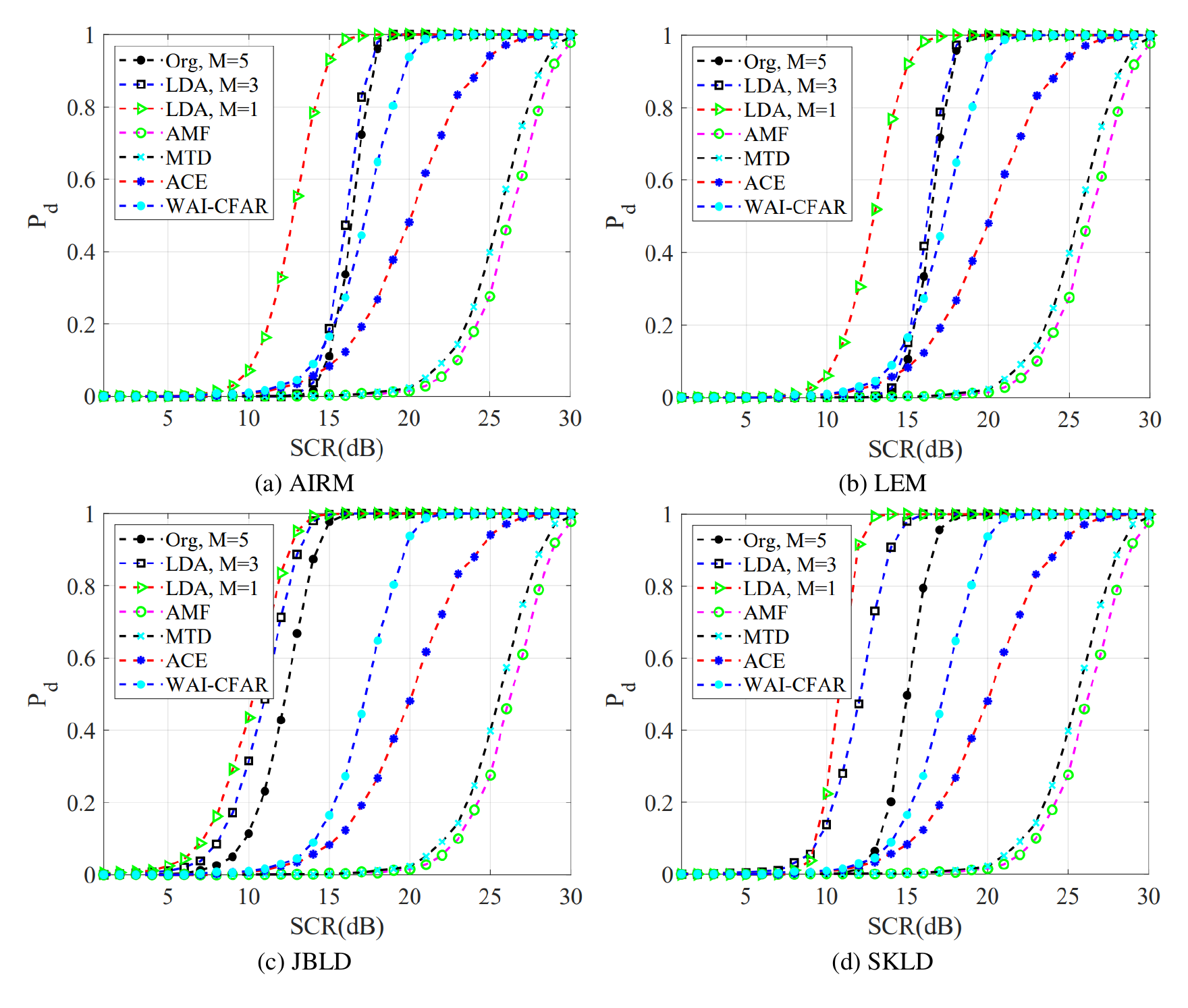}
\caption{$P_d$ versus SCR in real IPIX radar data ($K=2M$)}
\label{Pd_Real}
\end{figure*}

\subsection{Discrimination Analysis}

 To further validate the effectiveness of the LDA projection, we show the scatter plots of clutter data and clutter plus target data for SCR $= 5$ dB and SCR $= 10$ dB in Figs. \ref{Dis_SCR_5} and \ref{Dis_SCR_10}. Each sample data is modeled as an HPD matrix by Eq. \eqref{eq:AR1}. We transform each HPD matrix in a $4$-dimensional manifold by the projection matrix learnt from the training data according to the AIRM, LEM, JBLD and SKLD, respectively, which are then plotted in a $2$-dimensional space by the principal component analysis. It is noticed from Figs. \ref{Dis_SCR_5} and \ref{Dis_SCR_10} that the distance between the clutter data and clutter plus target data becomes larger after the LDA projection. Specifically, the original clutter data and clutter plus target data are mixed together and cannot be distinguished for SCR $= 5$ dB in Fig. \ref{Dis_SCR_5}. However, the data can be divided into two categories after being projected to the low dimensional manifold by the LDA projection. For SCR $= 10$ dB, the data becomes easily distinguishable after the LDA projection. These results further demonstrate the effectiveness of the proposed method.

\begin{figure*}[!t]
\centering
\includegraphics[width=18cm,angle=0]{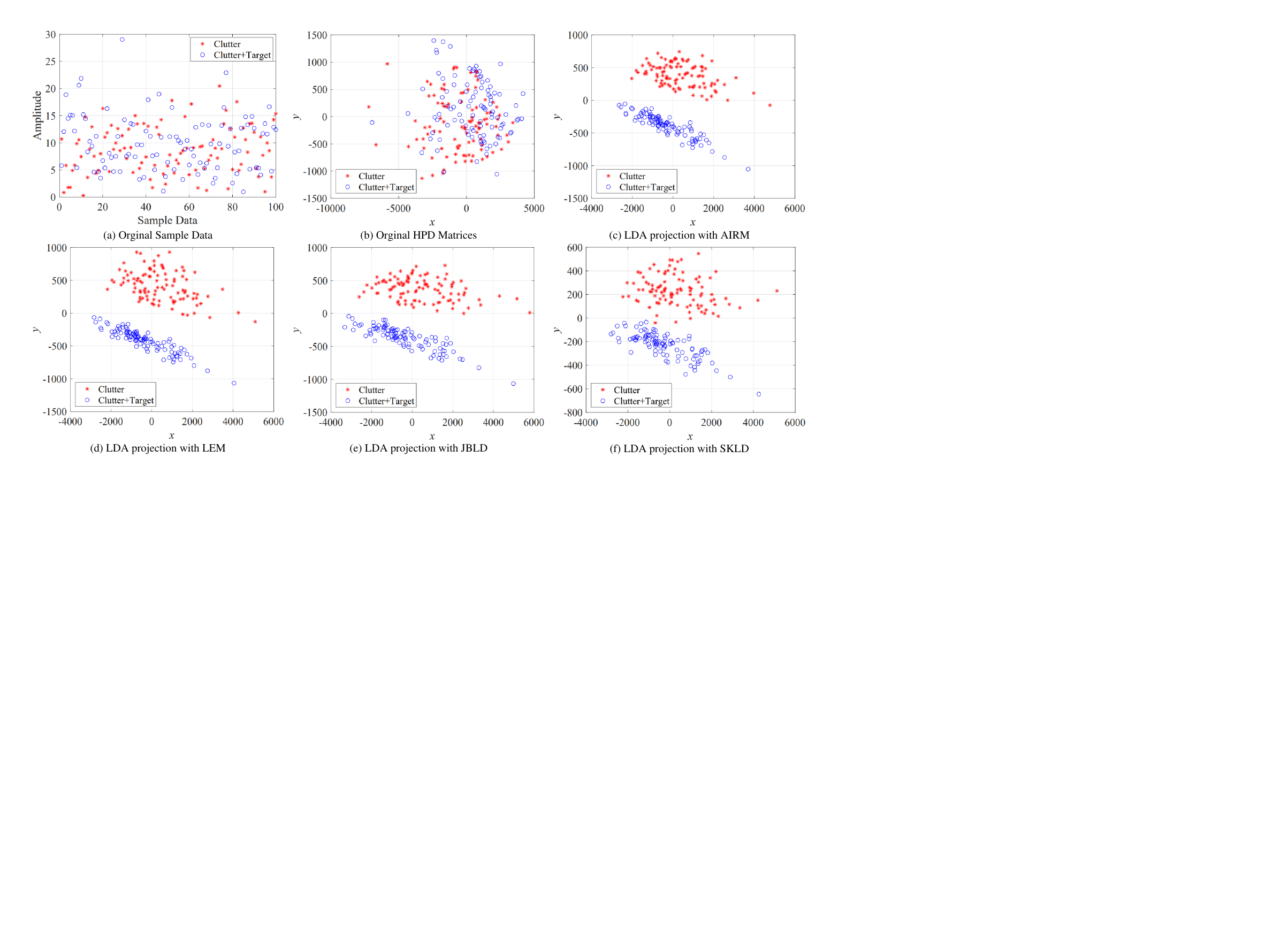}
\caption{Scatter plots with SCR $= 5$ dB.}
\label{Dis_SCR_5}
\end{figure*}

\begin{figure*}[!t]
\centering
\includegraphics[width=18cm,angle=0]{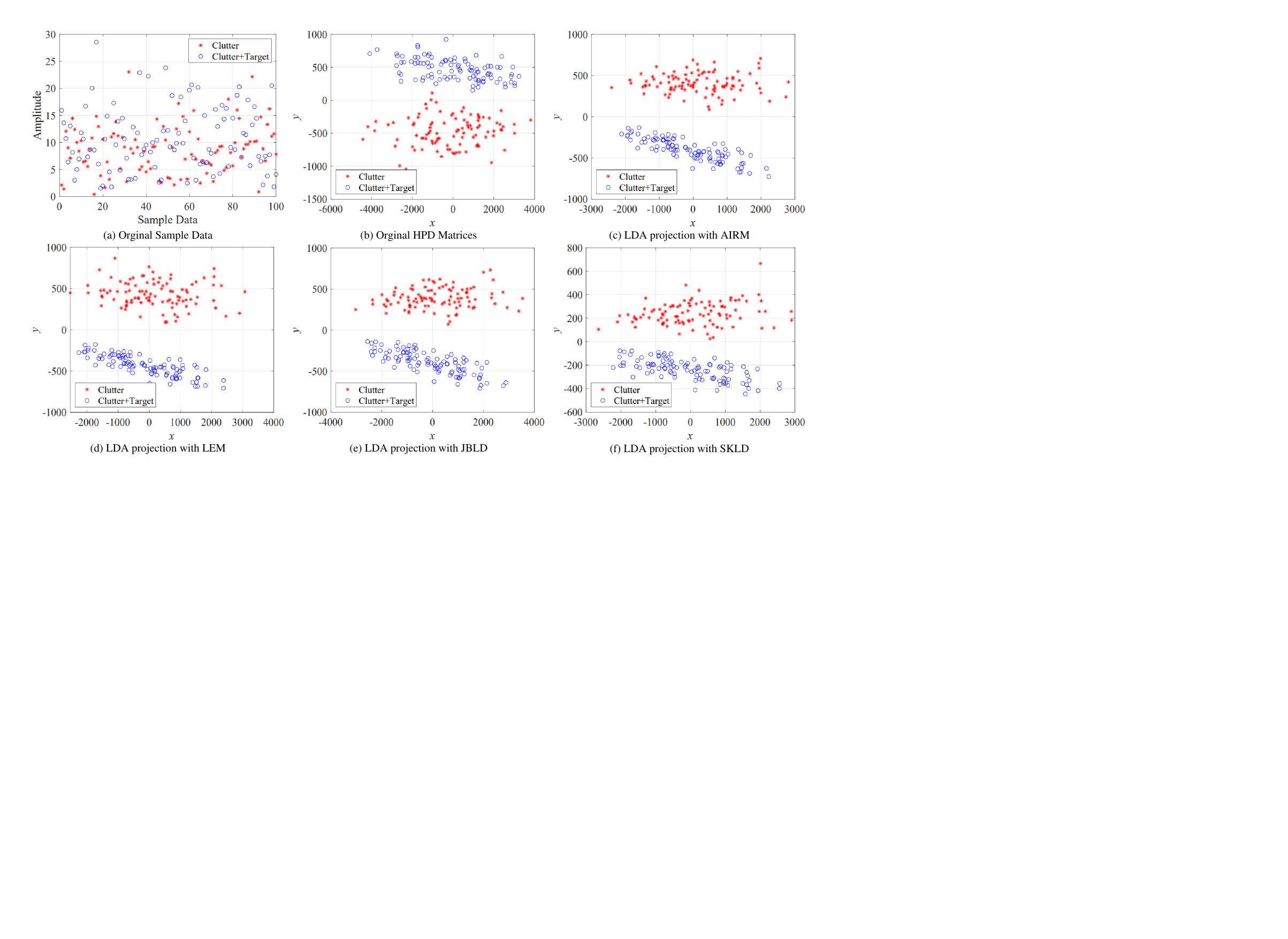}
\caption{Scatter plots with SCR $=10$ dB.}
\label{Dis_SCR_10}
\end{figure*}

\section{CONCLUSION}
\label{sec:con}

In this paper, four LDA-MIG detectors were proposed in nonhomogeneous sea clutter by incorporating a manifold projection inspired by the linear discriminant analysis. The manifold projection was defined through mapping HPD matrices from a higher-dimensional HPD manifold to a lower-dimensional one by maximizing the between-class distance while minimizing the within-class distance. We formulated such a projection as an optimization problem in the Stiefel manifold, which was solved numerically by the Riemannian gradient descent algorithm. Given a training dataset, a unique projection matrix could be  learnt for improving the discriminative power on HPD manifolds. Consequently, the LDA-MIG detectors performed better compared with respect to their counterparts without using the manifold projection as well as the AMF in nonhomogeneous clutter. However, as the projection matrix is learnt by using a training dataset, the detection performance of LDA-MIG detectors may somehow depend on the training dataset. Furthermore, we defined influence functions associated with different geometric means and calculated them in closed-form to analyze their robustness to outliers. It was shown, numerically, geometric means are more robust to outliers than the SCM.

As future researches, first of all, it is worthwhile to propose a methodology for selecting an efficiency training dataset for manifold projections. It would also be interesting to incorporate other matrix structures, e.g., shrinkage estimators and persymmetric covariance matrices, into the LDA-MIG detections, and to examine their performance differences. Finally, as illustrated through the simulations that characteristics of the data will probably affect design of the LDA-MIG detectors, further studies are needed on determining the optimal dimension of the reduced manifold for achieving the best detection performance. Finally, it would also be of interest to incorporate the accurate robustness analysis into consideration (see \cite{OP2022}).

\section*{APPENDIX}

\appendices

\section{The LEM influence function: Proof of Proposition \ref{prop:LEMinf_fun}}
\label{app:LEMH}

Consider the function
\begin{equation*}
\begin{aligned}
G(\bm{R}) &:= (1-\varepsilon)\frac{1}{K}\sum_{k=1}^{K} \norm{\operatorname{Log} \bm{R} - \operatorname{Log}\bm{R}_k}^2 \\
&\quad \quad + \varepsilon\frac{1}{J}\sum_{j=1}^{J} \norm{\operatorname{Log} \bm{R} - \operatorname{Log}\bm{P}_j}^2  .
\end{aligned}
\end{equation*}
$\nabla G(\widehat{\bm{R}})=\bm{0}$ is solved with $\widehat{\bm{R}}$ satisfying
\begin{equation*}
\begin{aligned}
&\frac{1-\varepsilon}{K}\sum_{k=1}^K\left(\operatorname{Log}\widehat{\bm{R}}-\operatorname{Log}\bm{R}_k\right)\\
&\quad\quad\quad\quad\quad +\frac{\varepsilon}{J}\sum_{j=1}^J\left(\operatorname{Log}\widehat{\bm{R}}-\operatorname{Log}\bm{P}_j\right)=\bm{0}.
\end{aligned}
\end{equation*}
We solve it to obtain
\begin{equation}\label{eq:logX}
\operatorname{Log}\widehat{\bm{R}}=\operatorname{Log}\overline{\bm{R}}+\varepsilon\left(\frac{1}{J}\sum_{j=1}^J\operatorname{Log}\bm{P}_j-\operatorname{Log}\overline{\bm{R}}\right),
\end{equation}
where $\overline{\bm{R}}$ is the LEM mean of the HPD matrices $\{ \bm{R}_k \}_{k=1}^K$, namely
\begin{equation*}
\operatorname{Log}\overline{\bm{R}}=\frac{1}{K}\sum_{k=1}^K\operatorname{Log}\bm{R}_k.
\end{equation*}
Since $\widehat{\bm{R}}=\overline{\bm{R}}+\varepsilon \bm{H}+O(\varepsilon^2)$, we differentiate Eq. \eqref{eq:logX} about $\varepsilon$, amounting to
\begin{equation*}
\begin{aligned}
&\frac{1}{J}\sum_{j=1}^J\operatorname{Log}\bm{P}_j-\operatorname{Log}\overline{\bm{R}}=\frac{\operatorname{d}}{\operatorname{d}\!\varepsilon}\Big|_{\varepsilon=0}\operatorname{Log}\widehat{\bm{R}}\\
&=\left(\int_0^1[(\widehat{\bm{R}}-\bm{I})s+\bm{I}]^{-1}\frac{\operatorname{d}\!\widehat{\bm{R}}}{\operatorname{d}\!\varepsilon}[(\widehat{\bm{R}}-\bm{I})s+\bm{I}]^{-1}\operatorname{d}\!s\right)\Big|_{\varepsilon=0}\\
\end{aligned}
\end{equation*}
\begin{equation*}
\begin{aligned}
=\int_0^1 [(\overline{\bm{R}}-\bm{I})s+\bm{I}]^{-1}\bm{H}[(\overline{\bm{R}}-\bm{I})s+\bm{I}]^{-1}\operatorname{d}\!s.
\end{aligned}
\end{equation*}
Taking trace on its both sides gives
\begin{equation*}
\operatorname{tr}(\overline{\bm{R}}^{-1}\bm{H})=\operatorname{tr}\left(\frac{1}{J}\sum_{j=1}^J\operatorname{Log}\bm{P}_j-\operatorname{Log}\overline{\bm{R}}\right).
\end{equation*}
Here, the following identity (\cite{Hig2008,Moa2005}) are applied
\begin{equation}\label{eq:logid}
\begin{aligned}
\frac{\operatorname{d}}{\operatorname{d}\!\varepsilon}\operatorname{Log}\bm{A}(\varepsilon)=&\int_0^1
[(\bm{A}(\varepsilon)-\bm{I})s+\bm{I}]^{-1}
 \frac{\operatorname{d}\!\bm{A}(\varepsilon)}{\operatorname{d}\!\varepsilon}\\
 ~~~~&\times [(\bm{A}(\varepsilon)-\bm{I})s+\bm{I}]^{-1}\operatorname{d}\!s.
\end{aligned}
\end{equation}

\begin{equation}\label{eq:intX}
\begin{aligned}
\int_0^1[(\bm{X}-\bm{I})s+\bm{I}]^{-2}\operatorname{d}\!s=\bm{X}^{-1}.
\end{aligned}
\end{equation}
Since $\overline{\bm{R}}$ is arbitrary, we can choose
\begin{equation*}
\bm{H}=\overline{\bm{R}}^{1/2}\left(\frac{1}{J}\sum_{j=1}^J\operatorname{Log}\bm{P}_j-\operatorname{Log}\overline{\bm{R}}\right)\overline{\bm{R}}^{1/2}.
\end{equation*}
This finishes the proof.

\section{The JBLD influence function: Proof of Proposition \ref{prop:JBLDinf_fun}}
\label{app:JBLDH}
Gradient of the function
 $$G(\bm{R})=(1-\varepsilon)\frac{1}{K}\sum_{k=1}^{K}d_J^2(\bm{R},\bm{R}_k) + \varepsilon\frac{1}{J}\sum_{j=1}^{J}d_J^2(\bm{R},\bm{P}_j)$$ can be computed straightforwardly:
\begin{equation*}
\begin{aligned}
\nabla G(\bm{R})&=(1-\varepsilon)\frac{1}{K}\sum_{k=1}^K\left(\left(\frac{\bm{R}+\bm{R}_k}{2}\right)^{-1}-\bm{R}^{-1}\right)\\
&\quad \quad + \frac{\varepsilon}{J}\sum_{j=1}^J\left(\left(\frac{\bm{R}+\bm{P}_j}{2}\right)^{-1}-\bm{R}^{-1}\right).
\end{aligned}
\end{equation*}
Differentiate $\nabla G(\widehat{\bm{R}})=\bm{0}$ about $\varepsilon$ at $\varepsilon=0$:
\begin{equation*}
\begin{aligned}
&\sum_{k=1}^K\left(\overline{\bm{R}}^{-1}\bm{H}\overline{\bm{R}}^{-1}-\left(\frac{\overline{\bm{R}}+\bm{R}_k}{2}\right)^{-1}\frac{\bm{H}}{2}\left(\frac{\overline{\bm{R}}+\bm{R}_k}{2}\right)^{-1}\right)\\
&\quad \quad \quad \quad \quad \quad +\frac{K}{J}\sum_{j=1}^J\left(\left(\frac{\overline{\bm{R}}+\bm{P}_j}{2}\right)^{-1}-\overline{\bm{R}}^{-1}\right)=\bm{0},
\end{aligned}
\end{equation*}
where $\overline{\bm{R}}$ satisfies
\begin{equation*}
\sum_{k=1}^K \left\{\left(\frac{\overline{\bm{R}}+\bm{R}_k}{2}\right)^{-1} - \overline{\bm{R}}^{-1} \right\}=\bm{0}.
\end{equation*}
Since we are only interested in the norm of $\bm{H}$, taking trace on both sides finishes the proof.

\section{Influence function of the SKLD mean: Proof of Proposition \ref{prop:SKLinf_fun}}
\label{app:SKLDH}

Gradient of the function
\begin{equation*}
\begin{aligned}
G(\bm{R})=(1-\varepsilon)\frac{1}{K}\sum_{k=1}^K d_S^2(\bm{R},\bm{R}_k)+\frac{\varepsilon}{J}\sum_{j=1}^J d_S^2(\bm{R},\bm{P}_j)
\end{aligned}
\end{equation*}
is given by
\begin{equation*}
\begin{aligned}
\nabla G(\bm{R}) & =\frac{1-\varepsilon}{2K}\sum_{k=1}^K\left(\bm{R}_k^{-1}-\bm{R}^{-1}\bm{R}_k\bm{R}^{-1}\right)\\
&\quad \quad +\frac{\varepsilon}{2J}\sum_{j=1}^J\left(\bm{P}_j^{-1}-\bm{R}^{-1}\bm{P}_j\bm{R}^{-1}\right).
\end{aligned}
\end{equation*}
However, it is more efficient simply by differentiating it about $\varepsilon$ at $\varepsilon=0$; this gives
\begin{equation*}
\begin{aligned}
\bm{H}\left(\sum_{k=1}^K\bm{R}_k^{-1}\right)\overline{\bm{R}} &+\overline{\bm{R}}\left(\sum_{k=1}^K\bm{R}_k^{-1}\right)\bm{H} \\
 & +\frac{m}{n}\sum_{j=1}^J \left( \overline{\bm{R}}\bm{P}_j^{-1}\overline{\bm{R}} - \bm{P}_j\right)=\bm{0},
\end{aligned}
\end{equation*}
which is a continuous Lyapunov equation. It is analytically solvable if the matrix $\left(\sum_{k=1}^K\bm{R}_k^{-1}\right)\overline{\bm{R}}$ or $-\left(\sum_{k=1}^K\bm{R}_k^{-1}\right)\overline{\bm{R}}$ is stable; namely,  its eigenvalues all have either positive or negative real parts. Again as we are only interested in its norm, instead  we take trace and choose
\begin{equation*}
\begin{aligned}
\bm{H}&=\frac{K}{J}\left(\left(\sum_{k=1}^K\bm{R}_k^{-1}\right)\overline{\bm{R}}+\overline{\bm{R}}\left(\sum_{k=1}^K\bm{R}_k^{-1}\right)\right)^{-1}\\
& \quad \quad \quad \times \sum_{j=1}^J\left(\bm{P}_j-\overline{\bm{R}}\bm{P}_j^{-1}\overline{\bm{R}}\right).
\end{aligned}
\end{equation*}
This finishes the proof.

\section{The Riemannian gradient descent algorithm on Stiefel manifolds}
\label{app:SM}
Consider to minimizing a function $\psi(\bm{W})$ defined in the Stiefel manifold $\operatorname{St}(M,\mathbb{C}^N)\subset \mathbb{C}^{N\times M}$. Its Riemannian gradient $\operatorname{grad}\psi(\bm{W})$ is given by  Eq. \eqref{eq:Rgrad}.

Tangent space of the Stiefel manifold is
\begin{equation*}
\begin{aligned}
T_{\bm{W}}\operatorname{St}(M,\mathbb{C}^N)=\left\{\bm{Z}\in\mathbb{C}^{N\times M} \mid \bm{W}^{\operatorname{H}}\bm{Z}+\bm{Z}^{\operatorname{H}}\bm{W}=\bm{0}\right\}.
\end{aligned}
\end{equation*}
Either the Euclidean metric inherited from the embedding in $\mathbb{C}^{N\times M}$ or a canonical metric can be defined on the manifold  $\operatorname{St}(M,\mathbb{C}^N)$. We will be focused on the former one that is defined by
\begin{equation*}
g_{\bm{W}}(\bm{Z}_1,\bm{Z}_2)=\operatorname{tr}\left(\bm{Z}_1^{\operatorname{H}}\bm{Z}_2\right),\quad \bm{Z}_1,\bm{Z}_2\in T_{\bm{W}}\operatorname{St}(M,\mathbb{C}^N).
\end{equation*}
Equipped with the Euclidean metric, a geodesic $\gamma:[0,1]\rightarrow \operatorname{St}(M,\mathbb{C}^N)$ with respect to initial values $\gamma(0)=\bm{W}, \dot{\gamma}(0)=\bm{Z}$
reads
\begin{equation}\label{eq:STgeo}
\gamma(t)=[\bm{W}~  \bm{Z}] \exp\left(t
\left[\begin{array}{cc}
\bm{A} & -\bm{B}\\
\bm{I} & \bm{A}
\end{array}\right]\right)
\left[\begin{array}{c}
\bm{I}\\
\bm{0}
\end{array}\right]\exp(-t\bm{A}),
\end{equation}
where
\begin{equation*}
\bm{A}=\bm{W}^{\operatorname{H}}\bm{Z},\quad \bm{B}=\bm{Z}^{\operatorname{H}}\bm{Z}.
\end{equation*}
The exponential map $\exp:T\operatorname{St}(M,\mathbb{C}^N)\rightarrow \operatorname{St}(M,\mathbb{C}^N)$  is defined by
\begin{equation*}
\exp_{\bm{W}}(\bm{Z})=\gamma(1),
\end{equation*}
where $\gamma$ is given by \eqref{eq:STgeo}.

Now we are ready to propose the Riemannian gradient descent algorithm:
\begin{equation*}
\bm{W}_{l+1}=\operatorname{exp}_{\bm{W}_l}\left(-\eta_l \operatorname{grad}\psi(\bm{W}_l)\right),
\end{equation*}
where $\eta_l$ is the step size.

For more details, see \cite{AMS2008} for instance.

\section{Euclidean gradient of the AIRM distance $d_A^2(f_{\bm{W}}(\bm{X}),f_{\bm{W}}(\bm{Y}))$: Proof of Proposition \ref{prop:AIRMEg}}
\label{app:AIRMEg}

Define
\begin{equation*}
\begin{aligned}
\bm{A}(\varepsilon) &=\left(\left(\bm{W}+\varepsilon\bm{R}\right)^{\operatorname{H}}\bm{X}\left(\bm{W}+\varepsilon\bm{R}\right)\right)^{-1}  \\
&~~~~ \times\left(\left(\bm{W}+\varepsilon\bm{R}\right)^{\operatorname{H}}\bm{Y}\left(\bm{W}+\varepsilon\bm{R}\right)\right),
\end{aligned}
\end{equation*}
where $\bm{X},\bm{Y}\in\mathscr{P}(N,\mathbb{C})$ and $\bm{W},\bm{R}\in \operatorname{St}(M,\mathbb{C}^N)$.
From the definition \eqref{def:gra} and using Eqs. \eqref{eq:logid} and \eqref{eq:intX}, we have
\begin{equation*}
\begin{aligned}
&\left\langle d_A^2(f_{\bm{W}}(\bm{X}),f_{\bm{W}}(\bm{Y})), \bm{R} \right\rangle  =\frac{\operatorname{d}}{\operatorname{d}\!\varepsilon}\Big|_{\varepsilon=0} \operatorname{tr} \left( \operatorname{Log}^2\bm{A}(\varepsilon)\right)\\
&\quad \quad\quad\quad \quad = 2 \operatorname{tr}\left(  \operatorname{Log}\bm{A}(0)\frac{\operatorname{d}}{\operatorname{d}\!\varepsilon}\Big|_{\varepsilon=0} \operatorname{Log} \bm{A}(\varepsilon)\right)\\
&\quad \quad\quad\quad\quad = 2\operatorname{tr}\left( \bm{A}^{-1}(0) \operatorname{Log}\bm{A}(0)  \frac{\operatorname{d}}{\operatorname{d}\!\varepsilon}\Big|_{\varepsilon=0}\bm{A}(\varepsilon)  \right).
\end{aligned}
\end{equation*}
Using the identity that
\begin{equation*}
\frac{\operatorname{d}}{\operatorname{d}\!\varepsilon}\bm{B}^{-1}(\varepsilon)=-\bm{B}^{-1}(\varepsilon)\frac{\operatorname{d}}{\operatorname{d}\!\varepsilon}\bm{B}(\varepsilon)\bm{B}^{-1}(\varepsilon),
\end{equation*}
where $\bm{B}(\varepsilon)$ is invertible, we have
\begin{equation*}
\begin{aligned}
&\left\langle d_A^2(f_{\bm{W}}(\bm{X}),f_{\bm{W}}(\bm{Y})), \bm{R} \right\rangle  \\
&~~~~= 4\operatorname{tr}\bigg( \operatorname{Log}\left(\left(\bm{W}^{\operatorname{H}}\bm{X}\bm{W}\right)^{-1}\left(\bm{W}^{\operatorname{H}}\bm{Y}\bm{W}\right)  \right)\\
&\quad \quad \quad \times\Big( \left(\bm{W}^{\operatorname{H}}\bm{Y}\bm{W}\right)^{-1} \left(\bm{R}^{\operatorname{H}}\bm{Y}\bm{W}\right)\\
&\quad \quad \quad \quad~~~~  -\left(\bm{W}^{\operatorname{H}}\bm{X}\bm{W}\right)^{-1} \left(\bm{R}^{\operatorname{H}}\bm{X}\bm{W} \right)\Big)\bigg).
\end{aligned}
\end{equation*}
Inside the trace, moving the terms $\bm{R}^{\operatorname{H}}\bm{X}\bm{W} $ and $\bm{R}^{\operatorname{H}}\bm{Y}\bm{W} $ to the left of the logarithm finishes the proof.

\bibliographystyle{IEEEtran}
\bibliography{mybibfile}

\end{document}